\def\paragraph{\@startsection{paragraph}{4}%
  \z@\z@{-\fontdimen2\font}%
  {\normalfont\bfseries}}
\tikzset{>={Latex}}
\renewcommand\vec{\bm}
\newcommand{\mat}[1]{{\mathbf{#1}}}
\newcommand{\CD}{\mathop{{}D}}
\newcommand{\CDnok}{\mathop{{}D^{\mathrm{nk}}}}
\newcommand{\Din}[1]{{\mathcal{D}_{\mathrm{in}}(#1)}}
\newcommand{\degin}[1]{{d_{\mathrm{in}}(#1)}}
\newcommand{\degout}[1]{{d_{\mathrm{out}}(#1)}}
\newcommand{\Dout}[1]{{\mathcal{D}_{\mathrm{out}}(#1)}}
\newcommand{\Kcd}[1]{{\mathcal{K}(#1)}}
\newcommand{\Nbd}[1]{{\mathcal{N}^{\mathrm{CD}}(#1)}}
\newcommand{\Nbdnok}[1]{{\mathcal{N}^{\text{CD}_{\mathrm{nk}}}(#1)}}
\newcommand{\Cnet}{\mathop{{}\mathcal{G}}}
\newcommand{\BC}{\mathop{{}B}}
\newcommand{\IDC}{\mathop{{}Q}}
\newcommand{\PR}{\mathop{{}\Pi}}
\newcommand{\BCnorm}{\mathop{{}B^{\mathrm{CD}}}}
\newcommand{\BCnormnok}{\mathop{{}B^{\mathrm{nk}}}}
\newcommand{\pnorm}{\mathop{{}p^{\mathrm{CD}}}}
\newcommand{\pnormnok}{\mathop{{}p^{\mathrm{nk}}}}
\newcommand{\shortestp}{\sigma}
\newcounter{desccount}
\newcommand{\descref}[1]{\hyperref[#1]{#1}}
\newtheorem{theorem}{Theorem}
\newtheorem{proposition}[theorem]{Proposition}
\theoremstyle{definition}
\newtheorem{definition}{Definition}
\newcommand{\pushright}[1]{\ifmeasuring@#1\else\omit\hfill$\displaystyle#1$\fi\ignorespaces}
\newcommand{\pushleft}[1]{\ifmeasuring@#1\else\omit$\displaystyle#1$\hfill\fi\ignorespaces}
\titleformat*{\section}{\normalsize\bfseries}
\titleformat*{\subsection}{\normalsize\bfseries}
\titleformat*{\subsubsection}{\normalsize\bfseries}
\title{A Mathematical Framework for Citation Disruption\thanks{We thank the National Science Foundation for financial support of work related to this project (grants 1829168, 1932596, and 1829302).}}
\author[1]{Thomas Gebhart}
\author[2]{Russell Funk}
\affil[1]{{\small{Computer Science and Engineering, University of Minnesota}}}
\affil[2]{{\small{Carlson School of Management, University of Minnesota}}}
\date{}
\begin{document}

\maketitle

\begin{abstract}
Many theories of scientific and technological progress imagine science as an iterative, developmental process periodically interrupted by innovations which disrupt and restructure the status quo. Due to the immense societal value created by these disruptive scientific and technological innovations, accurately operationalizing this perspective into quantifiable terms represents a key challenge for researchers seeking to understand the history and mechanisms underlying scientific and technological progress. Researchers in this area have recently proposed a number of quantitative measures that seek to quantify the extent to which works in science and technology are disruptive with respect to their scientific context. While these disruption measures show promise in their ability to quantify potentially disruptive works of science and technology, their definitions are bespoke to the science of science and lack a broader theoretical framework, obscuring their interrelationships and limiting their adoption within broader network science paradigms. We propose a mathematical framework for conceptualizing and measuring disruptive scientific contributions within citation networks through the lens of network centrality, and formally relate the CD Index disruption measure and its variants to betweenness centrality. By reinterpreting disruption through the lens of centrality, we unify a number of existing citation-based disruption measures while simultaneously providing natural generalizations which enjoy empirical and computational efficiencies. We validate these theoretical observations by computing a variety of disruption measures on real citation data and find that computing these centrality-based disruption measures over ego networks of increasing radius results in better discernment of future award-winning scientific innovations relative to conventional disruption metrics which rely on local citation context alone. This work extends the theoretical foundations and potential applications of citation disruption measures and clarifies the relationship to other notions of scholarly importance, highlighting fruitful connections between bibliometrics and network science.
\end{abstract}

\clearpage

\section{Introduction}

Scientific and technological knowledge is characterized by its dynamic nature, constantly evolving through the contributions of scientists and inventors~\citep{popper2005logic, mokyr1992lever, arthur2009nature, fleck2012genesis, arthur2007structure, mokyr1992lever}. 
This evolution is driven by a combination of developmental improvements and disruptive breakthroughs, which shape the trajectory of progress in science and technology. 
The Kuhnian view of scientific and technological progress imagines science as an iterative process, developing incrementally through time, periodically interrupted by periods of revolution, wherein major paradigm shifts disrupt the accepted principles beheld by the preceding ``normal'' regime~\citep{kuhn1962structure}. 
In a similar vein, the ``creative destruction'' theory of economic innovation, popularized by Schumpeter, posits that industrial progress is driven by the incessant destruction of old technologies by the new~\citep{schumpeter1942capitalism}. 
From general relativity to penicillin, DNA to the internet, artifacts of this revolutionary potential of science continuously restructure society and our shared understanding of the universe. 
Due to the immense societal value created by these disruptive scientific and technological innovations, accurately operationalizing this perspective into quantifiable terms represents a key challenge for researchers seeking to understand the history and mechanisms underlying scientific and technological progress~\citep{fortunato2018science}. 

Recently, a number of promising network-theoretic measurements of such scientific and technological disruption have emerged towards this goal~\citep{funk2017dynamic, bornmann2020disruptive, leydesdorff2021proposal}.
Buoyed by the advent of massive, electronic bibliometric datasets, these \emph{disruption measures} operationalize the revolutionary interpretation of scientific progress by evaluating the extent to which particular works of science or technology restructure their local knowledge niche, as defined by their relationship to their neighborhood within a citation network. 
These disruption measures have shown promise in their ability to pick out scientific and technological works that are interpreted as paradigm-shifting~\citep{bornmann2020disruption} while remaining distinct from citation count, a widely-acknowledge--but sometimes flawed~\citep{bornmann2008citation}--indicator of innovative value.
These disruption measures have received wide adoption within the field of science and innovation studies, and have begun to appear as dependent variables in a number of metascientific analyses measuring the differences in scientific achievement with respect to team size~\citep{wu2019large}, the effects of topical disagreements on scientific output~\citep{lin2022new}, and the observed slowing pace of scientific disruption altogether~\citep{park2023papers}.

Despite this empirical success, these disruption measures are largely lacking in a robust mathematical foundation. 
Their definitions typically rely on counts of papers within bespoke constructions of network neighborhoods and are heavily dependent on their citation network context. 
This lack of mathematical formalism hides the relationships among competing measures of citation disruption, limits the wider application of these disruption measures to other non-bibliometric network-theoretic domains, hinders the development of more extensive models of scientific innovation, and obscures their position within the broader network social science paradigm~\cite{borgatti2009network}.


In this work, we bridge this theoretical gap by providing a mathematical framework for the definition of citation disruption via network centrality.
Specifically, we re-conceptualize a popular measure of citation disruption, the CD Index~\citep{funk2017dynamic}, as a measure of betweenness centrality, a well-studied concept in network science that seeks to measure the ``importance'' of nodes within a network as a function of the proportion of shortest paths passing through each node~\citep{anthonisse1971rush, freeman1977set}. 
In addition, we show that this centrality framework for measuring scientific and technological disruption is both flexible enough to express many of the objectives sought by a citation-based disruption measure while also recovering existing disruption measures like citation count and variants of the CD Index~\citep{funk2017dynamic, leydesdorff2021proposal, bu2021multidimensional, leibel2023we} as special cases. 
The relationship between citation disruption and other frequently-used measures of scholarly importance becomes clear under this network-theoretic reframing, thereby broadening the relevance of disruption measurement to other network science domains and vice versa.

In addition to unifying a number of existing disruption measures, this centrality definition of disruption also points towards natural extensions to existing disruption measures which are better aligned with their theoretical motivations and more robust to the noisiness of real-world citation patterns. 
We verify the empirical potential of these extensions by observing that they are more discerning of award-winning scientific and technological innovations compared to other disruption measures, like the CD Index, or citation count.

\section{Measuring Disruption}\label{sec:disruption}

Foundational theories of scientific and technological change highlight the existence of two types of breakthroughs~\citep{kuhn1962structure, schumpeter1942capitalism, dosi1982technological}. The first type consists of contributions that enhance and refine existing streams of knowledge, thereby consolidating the status quo \citep{enos1958measure, david1990dynamo, rosenberg1982inside}. These developmental improvements build upon established theories and methodologies, refining them for greater accuracy, efficiency, or applicability, thereby making them more valuable \citep{enos1962invention}. The second type of breakthroughs challenge and disrupt existing knowledge, rendering it obsolete and propelling science and technology in new and unforeseen directions \citep{tushman1986technological}. These breakthroughs have the potential to revolutionize entire fields, opening up new avenues of inquiry and application. By embracing both types of breakthroughs, the scientific and technological community continually pushes the boundaries of what is known and reshapes our understanding of the world, paving the way for transformative advancements and discoveries.

\subsection{Citation Networks}

Given the abstract and multifaceted nature of scientific and technological knowledge, precisely measuring and quantifying the distinction between developmental and disruptive intellectual contributions poses a significant challenge. However, large-scale bibliometric data, particularly in the form of published scientific papers and patented technologies, offer a valuable context within which to begin making such quantifications \citep{price1963, jaffe2002patents}. The vast body of scientific literature and patent records provides a wealth of information that enables researchers to analyze and trace the evolution of ideas, concepts, and technologies \citep{liu2023data, wang2021science}. Papers and patents not only present novel ideas but also make citations to prior works, thereby establishing a conceptual genealogy. Analysis of the evolution of citation networks therefore enables tracing of the influence and impact of specific contributions, discernment of patterns of continuity and transformation, and consequently, one approach for the identification of disruptive breakthroughs. While acknowledging the inherent complexities \citep{bornmann2015methods,tahamtan2019citation, tahamtan2018core,bornmann2020can, waltman2016review}, leveraging bibliometric data in the study of scientific and technological evolution provides valuable insights into the dynamics of knowledge advancement and facilitates a more nuanced understanding of the disruptive forces driving innovation \citep{wu2019large, figueiredo2019quantifying, andrade2020measuring, azoulay2019does, leahey2023types, zeng2021fresh, chu2021slowed, wang2021science, wang2023quantifying}.

\begin{wrapfigure}{r}{0.5\textwidth}
    \begin{center}
    \includegraphics[width=\textwidth]{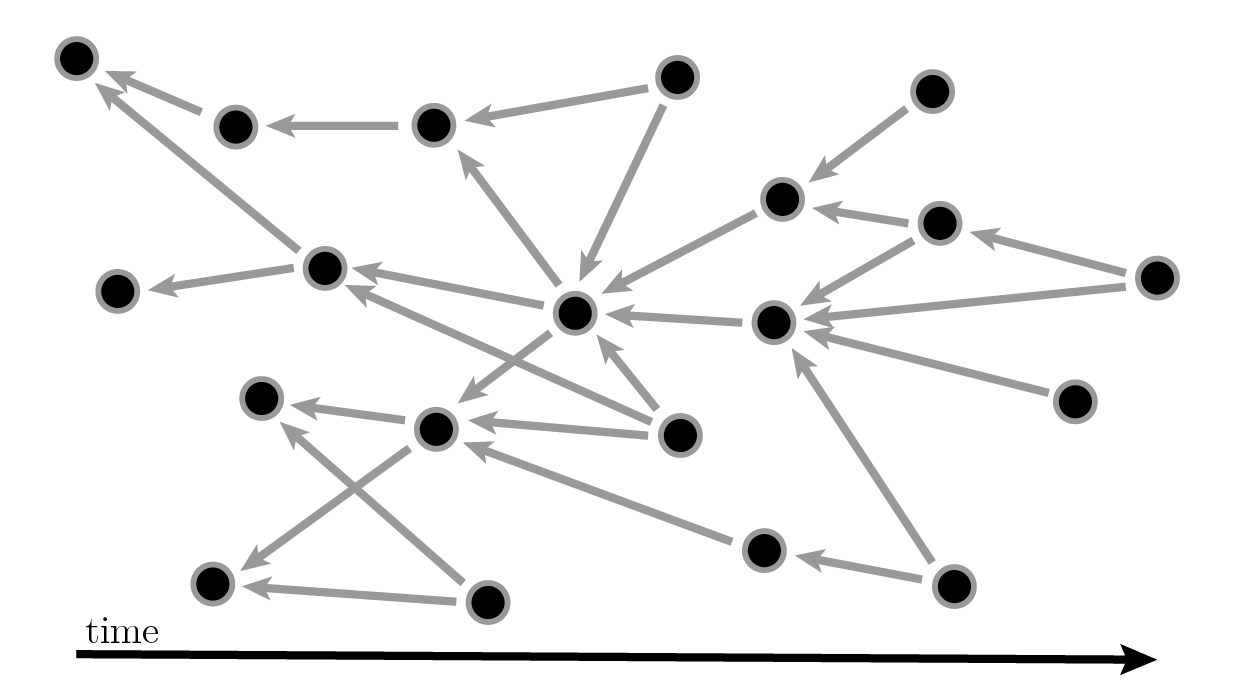}
    \end{center}
    \caption{An example citation network $\Cnet$.}
    \vspace{-20pt}
    \label{fig:citation_network}
\end{wrapfigure}

We define these citation networks as follows.
\begin{definition}
    Given a collection of papers $V$, a \emph{citation network} $\Cnet = (V,E)$ is a directed, unweighted graph formed by directed edges $(u,v) \in E$ connecting $u \in V$ directed towards $v \in V$ if paper $u$ cites paper $v$. 
\end{definition}
Here we have used ``papers'' as shorthand to refer to any attributed scientific or technological work which may be situated within a citation network (academic publications, books, patents, etc.). 
We will continue to use this nomenclature in the following sections though the results extend to any corpus of work which engages in attribution. 
Under some minor assumptions, we can view $\Cnet$ as acyclic, with the directedness of edges describing an implicit temporal ordering of papers, such that the existence of edge $(u,v)$ implies paper $v$ was published before paper $u$.
It will sometimes be convenient to overload notation and write $V(\Cnet)$ or $E(\Cnet)$ to refer to the set of nodes or set of edges, respectively, of graph $\Cnet$. 

Note that each paper $v$ induces a (possibly trivial) subgraph of papers $\Dout{v}$ composed of edges recording $v$'s citations to prior work $E(\Dout{v}) = \{(v,u) \mid u \in V(\Dout{v}) \}$. 
This $\Dout{v}$ subgraph represents paper $v$'s approximation of how the concepts and ideas presented in $v$ relate to or are otherwise inspired by the cited collection of prior work $V(\Dout{v}) \setminus \{v\}$. 
Conversely, $v$ also induces a (possibly trivial) subgraph $\Din{v}$ composed of edges $E(\Din{v}) = \{(u,v) \mid u \in V(\Din{v}) \}$ which connect each paper citing $v$ to $v$.
The set $V(\Din{v}) \setminus \{v\}$ may be interpreted as the collection of papers that were directly impacted by or otherwise derived ideas from paper $v$ specifically. 

\subsection{Properties of Disruption Measures}

The definition and measurement of disruption within citation networks necessitate an examination of how intellectual contributions alter the value of prior streams of knowledge upon which they build~\citep{funk2017dynamic, park2023papers}. At the core of the notion of disruption lies the transformative effect that contributions have on these streams, simultaneously propelling them in new directions while breaking with the past, resulting in a decrease in the use of preceding works. Conversely, developmental contributions enhance the value and utility of previous work, increasing its usage. Therefore, a quantitative measure of disruption should primarily focus on characterizing whether and how a paper alters the use of its predecessors. Within the context of citation networks, this can be accomplished by evaluating the degree by which future works cite the prior works referenced by a focal paper. 

Beyond this fundamental requirement, we further suggest that such a measure should account for the intricate interconnectedness of scientific and technological knowledge, acknowledging the potential for both direct and indirect influences of a particular work. Specifically, it should be capable of characterizing neighborhoods of influence of varying sizes, capturing the nuanced cascading effects on subsequent scientific and technological development. Further, while theories of scientific and technological change often discuss disruption in categorical or binary terms, it is more appropriate to consider disruptiveness as a measure of degree~\citep{funk2017dynamic}. Some works fully eclipse the prior streams of work upon which they build, while others cast more partial shadows. Therefore, an ideal measure should exhibit a continuous nature, allowing for the quantification of these gradations.

In summary, given a citation network $\Cnet = (V,E)$, we seek to derive a \emph{disruption measure} $\Delta: V \to \mathbb{R}$ which captures the extent to which paper $v$ is disruptive with respect to the rest of $\Cnet$. 
Specifically, we seek a function $\Delta$, dependent on $\Cnet$, which
\begin{enumerate}
\item\label{enum:temporal_ordering} respects the temporal ordering of $\Cnet$,
\item\label{enum:citation_degree} measures the degree by which future works cite the prior works referenced by a focal paper,
\item\label{enum:indirect_influence} is sensitive to direct and indirect influence on future works,
\item\label{enum:continuity} is continuous with respect to the disruptive effects measured.
\end{enumerate}

We will refer back to these properties of disruption measures in the next section when we introduce network centrality, observing that many centrality measures happen to satisfy these requirements.

\section{Measuring Disruption with Network Centrality}\label{sec:centrality}

While there are no precise boundaries for its definition within network science, we define \emph{centrality} as a class of functions defined on networks which measure the structural or informational ``importance'' of nodes within the network.
Network centrality has a storied history within the social sciences, with the earliest application of this concept, closeness centrality, appearing at least as early as 1950 as an inverse measure of average distance to each node in the graph for use in evaluating communication efficiency in problem solving across different social group topologies~\citep{bavelas1950communication}. 
Since then, the number and variety of centrality measures have grown substantially~\citep{newman2018networks}. 
This growth is due to the fact that the notion of importance is highly context-dependent: importance in a social network may differ from that of a biophysical network, and different still from that a transportation network.\footnote{See \citet{landherr2010critical} for a review of centrality in social networks, \citet{ghasemi2014centrality} for usage of centrality in biological networks, and  and Chapter 7~\citet{newman2018networks} for a general introduction to some well-known centrality measures. \citet{bloch2023centrality} propose taxonomy of centrality measures.}

In this section, we provide an introduction to some well-known centrality measures and observe that all of the desired properties of a disruption measure discussed in Section~\ref{sec:disruption} can be satisfied by both betweenness and Pagerank centrality. 

\subsection{Degree Centrality}

\begin{wrapfigure}{r}{0.5\textwidth}
    \begin{center}
    \includegraphics[width=\textwidth]{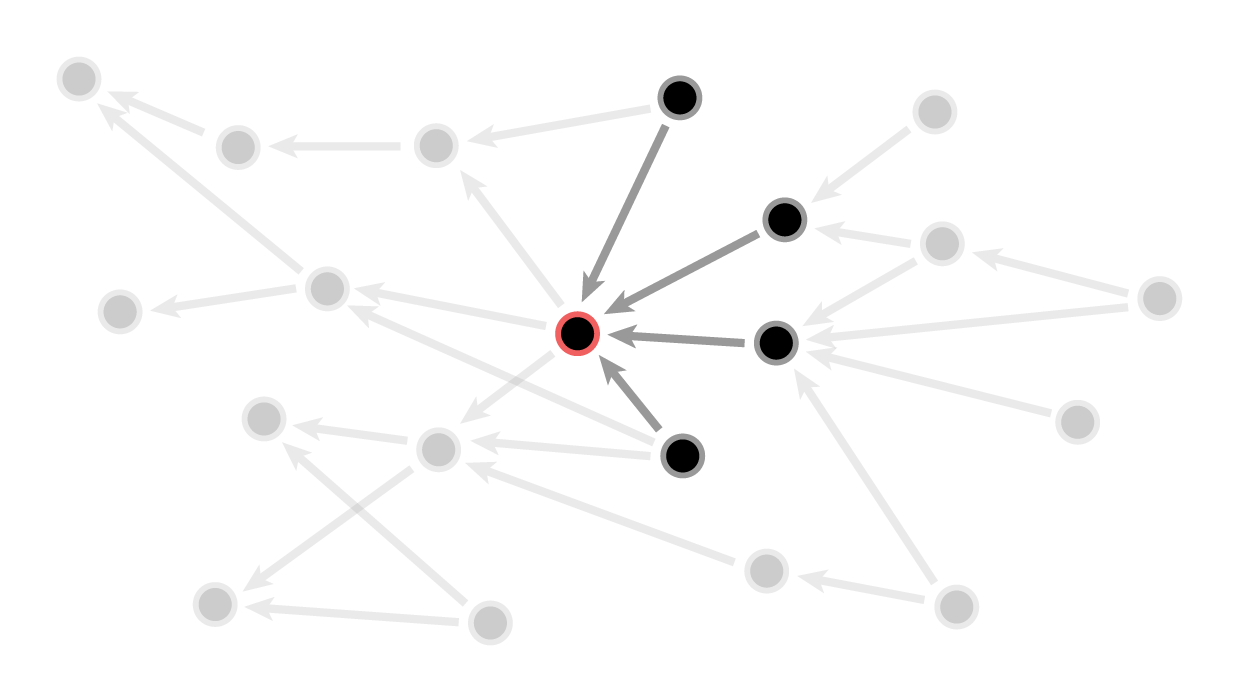}
    \end{center}
    \caption{The in-degree subgraph $\Din{v}$ of a citation network $\Cnet$.}
    \vspace{-20pt}
    \label{fig:indegree}
\end{wrapfigure}

The degree centrality of a node measures its number of incident edges. 
Over directed graphs, degree centrality subdivides into out-degree and in-degree centrality, depending on the orientation of the nodes incident edges. 
The latter of these centrality measures is relevant for our uses, so we provide a proper definition.
\begin{definition}
    Given a directed graph $\mathcal{H} = (V,E)$, define the in-neighborhood $\Din{v}$ of node $v$ with node set $V(\Din{v}) = \{v\} \cup \{u \mid (u,v) \in E\}$ and edge set $E(\Din{v}) = \{(u,v) \mid u \in V(\Din{v}) \}$. The \emph{in-degree centrality} $\IDC: V \to \mathbb{Z}_{\geq 0}$ of node $v$ is given by $\IDC(v) = |V(\Din{v})|-1 = \degin{v}$.
\end{definition} 
Evaluated over a citation network $\Cnet$, in-degree centrality satisfies disruption Properties~\ref{enum:temporal_ordering} and ~\ref{enum:continuity}, as the directedness of $\Cnet$ reflects the temporal publication order of papers and $\degin{v}$ may be arbitrarily large, respectively. 
However, $\IDC$ does not satisfy Properties~\ref{enum:citation_degree} and~\ref{enum:indirect_influence} due to its focus only on the citing works of focal node $v$. 

\subsection{Betweenness Centrality}

Betweenness centrality~\citep{anthonisse1971rush, freeman1977set} measures the importance of a node $v$ in a network by counting the proportion of shortest paths between each pair of vertices in the graph which pass through $v$. 
\begin{definition}\label{def:betweenness_centrality}
    The \emph{betweenness centrality} $\BC: V \to \mathbb{R}_{\geq 0}$ of a node $v$ within a graph $\mathcal{H} = (V,E)$ is given by
    \begin{equation*}
        \BC(v) = \frac{1}{p}\sum\limits_{s \in V \setminus \{v\}} \sum\limits_{t \in V \setminus \{v\}} \frac{\shortestp(s,t \mid v)}{\shortestp(s,t)}
    \end{equation*}
    where $\shortestp(s,t)$ is the number of shortest paths between nodes $s$ and $t$ in $\mathcal{H}$, $\shortestp(s,t \mid v)$ is the number of shortest paths originating at node $s$ and terminating at $t$ which pass through $v$, and $p$ is a normalization constant. 
\end{definition}
Evaluated over citation network $\Cnet$, betweenness centrality satisfies all of the disruption Properties listed in Section~\ref{sec:disruption}. 
Because $\BC(v)$ is a relative count of shortest paths from arbitrary nodes across a temporally-ordered citation network, if $v$ acts as a bottleneck in the citation network, requiring its visitation along a shortest path between future and past works of paper $v$, its betweeenness centrality will be high. 
By contrast, if the cited works of $v$ are frequently cited by future works of $v$, its betweenness will be low, as there are multiple shortest paths from future to past work which route around $v$. 

Using each node's inclusion in shortest paths as the measurement of ``importance'' endows the betweenness centrality measure of disruption with particular semantics. 
This geodesic betweenness condition implies that if any path between nodes $s$ and $t$ is shorter than the shortest path between $s$ and $t$ passing through $v$, then $v$ does not inherit any ``importance'' with respect to those pairs of nodes. 
One can imagine relaxing these austere flow constraints such that if a the path(s) between $s$ and $t$ through $v$ are ``close'' to being important, then $v$ still inherits some centrality from this relationship. 
Pagerank centrality represents one such relaxation by replacing the shortest path betweenness measure with a visitation probability determined by a random walk. 

\subsection{Pagerank Centrality}

Pagerank centrality~\citep{page1998pagerank} reinterprets the directed graph $\mathcal{H} = (V,E)$ with adjacency matrix $\mat{A}$ as a Markov chain with transition probabilities $\mat{P} = \mat{D}^{-1}\mat{A}$, where $\mat{D} = \mathrm{diag}(\mat{A}\vec{1})$ is a diagonal matrix of node out-degrees and $\vec{1}$ is a vector of $1$'s.
Pagerank assigns centrality based on the stationary distribution of a random walk on this Markov chain. 
Dangling nodes in $\mathcal{H}$ which have zero out-degree eventually capture all probability mass, and thus trivialize the long-run random walk dynamics.
To combat this behavior, we connect these dangling nodes to other non-dangling nodes in the graph according to probability vector $\vec{\gamma}$, resulting in a new stochastic matrix $\bar{\mat{P}}$.

\begin{definition}
    Given directed graph $\mathcal{H} = (V,E)$ with stochastic transition matrix $\bar{\mat{P}}$ determined by personalization vector $\vec{\gamma} > 0$ and teleportation probability $\alpha$, the \emph{personalized Pagerank centrality} $\PR: V \to \mathbb{R}_{\geq 0}$ of node $v \in V$ is given by $\PR(v) = \vec{\pi}_{v}$ where $\vec{\pi}$ is the solution to the eigenvalue problem
    \begin{equation}\label{eq:pagerank}
        \vec{\pi}^\top (\alpha \bar{\mat{P}} + (1 - \alpha) \vec{1}\vec{\gamma}^\top) = \vec{\pi}^\top .
    \end{equation}
\end{definition}
When $\vec{\gamma} = \vec{1}|V|^{-1}$, we recover the original Pagerank algorithm which assigns equal teleportation probability between each pair of nodes in the network.  

Equation~\ref{eq:pagerank} solves for the stationary distribution of a random walk on $\mathcal{H}$ which teleports to new nodes with probability $1-\alpha$. 
Thus, measured over citation network $\Cnet$, we may interpret $\PR(v)$ as measuring the likelihood a random walker moving backwards through time along paper citations passes through node $v$ (Properties~\ref{enum:temporal_ordering},~\ref{enum:continuity}). 
If paper $v$ is highly-cited or is cited by a number of highly-cited papers, $\PR(v)$ will be high because $v$ has many opportunities to be visited along a random walk (Properties~\ref{enum:citation_degree},~\ref{enum:indirect_influence}). 
Thus, we may interpret $\PR$ as a disruption measure in the sense that papers with high Pagerank will be those which will be most likely traversed when walking the citation network between present and past works. 
Under this interpretation, we see Pagerank satisfies all four properties of a disruption measure given in Section~\ref{sec:disruption}. 

\section{Existing Disruption Measures are Centrality Measures}

We will now show that many of the measures already in use for quantifying scientific and technological disruption on citation networks may be rewritten as specific instances of the well-known centrality measures given in the previous section. 
In particular, we show that citation count and in-degree centrality are interchangeable, and that the CD Index is a shifted version of betweenness centrality evaluated over a bespoke neighborhood graph around each node in the network. 

\subsection{Citation Count}\label{sec:citation_count}
Citation count is a ubiquitous measure of scientific and technological impact that records the number of times an individual paper $v$ has been cited. 
Embedded in a citation network $\Cnet$, citation count and in-degree centrality $\IDC$ are equivalent.
This equivalence implies citation count fails to satisfy Properties~\ref{enum:citation_degree} and~\ref{enum:indirect_influence} given in Section~\ref{sec:disruption}.

\subsection{The CD Index}\label{sec:cd_index}

The CD Index~\citep{funk2017dynamic} is a citation-based measure $\CD(v)$ of the ``disruptive'' effect that a scientific work $v$ introduces with respect to its topic-specific context within a citation network. 
This topic context of $v$ is typically proxied by observing the citation patterns of a neighborhood $\Nbd{v}$ around $v$ within the broader citation graph. 
The CD Index, then, is a measure over the possible configurations of the citation neighborhood $\Nbd{v}$, assigning higher values to $v$ which have high ``importance'' to the connectivity of $\Nbd{v}$ and low values to those with relatively low ``importance'' within $\Nbd{v}$.

A number of distinct disruption measures have been introduced under the ``CD Index'' moniker~\citep{funk2017dynamic, bornmann2020disruption, wang2023quantifying, leydesdorff2021proposal, chen2021destabilization, li2022measuring, deng2023enhancing, wu2019solo}. 
We will narrow our focus to two closely related definitions of the CD Index, which we denote $\CD$ and $\CDnok$, introduced in \cite{funk2017dynamic} and \cite{bornmann2020disruption}, respectively. 

The definition of the CD Index relies on the construction of a bespoke neighborhood subgraph $\Nbd{v}$ around node $v$. 
This neighborhood subgraph forms the basis of the CD Index and is given by the following union of graphs:
\begin{definition}\label{def:cd_neighborhood}
Given an ambient citation graph $\Cnet$, the \emph{CD Index neighborhood} $\Nbd{v}$ \emph{of node $v$}  is defined by
\begin{align}
    \Nbd{v} &= \Din{v} \cup \Dout{v} \cup \left(\bigcup\limits_{u \in V(\Dout{v})} \Din{u}\right)\nonumber \\
    &= \Din{v} \cup \Dout{v} \cup \Kcd{v}\label{eq:cd_neighborhood}
\end{align}
\end{definition}
where $$\Kcd{v} = \bigcup\limits_{u \in V(\Dout{v})} \Din{u}$$ is the subgraph composed of the union of in-citations for each node in the out-citation subgraph of $v$. 
\begin{figure}
    \centering
    \includegraphics[width=0.49\textwidth]{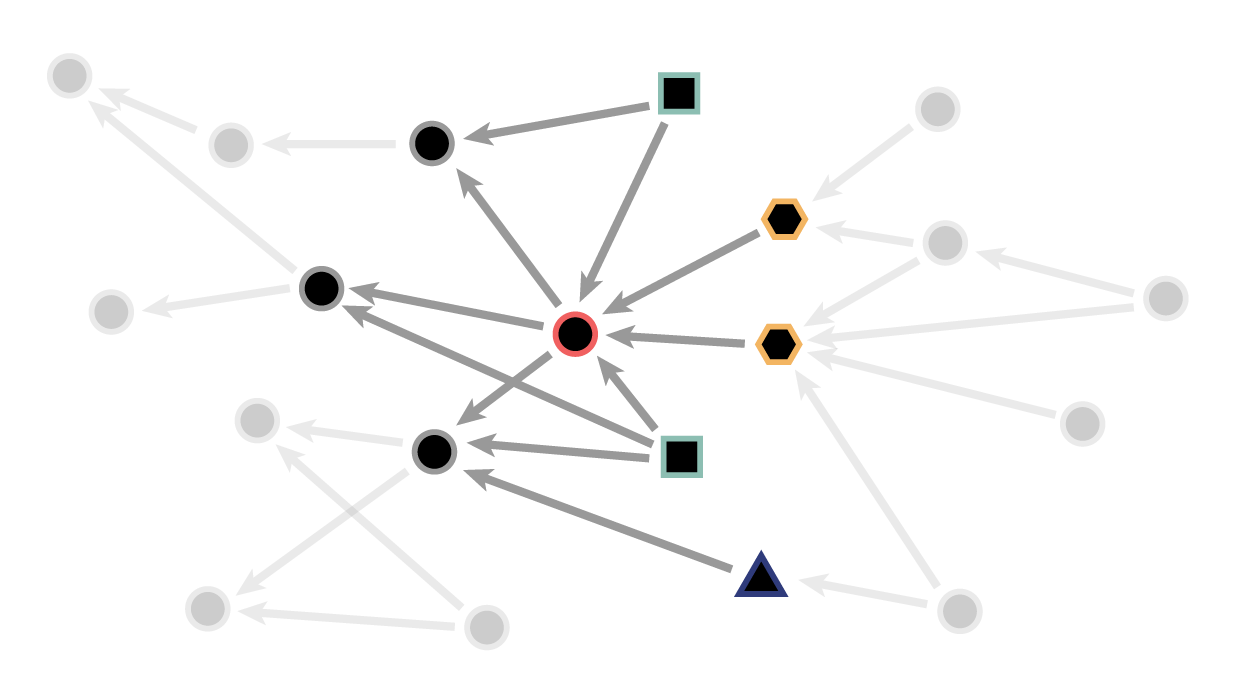}
    \includegraphics[width=0.49\textwidth]{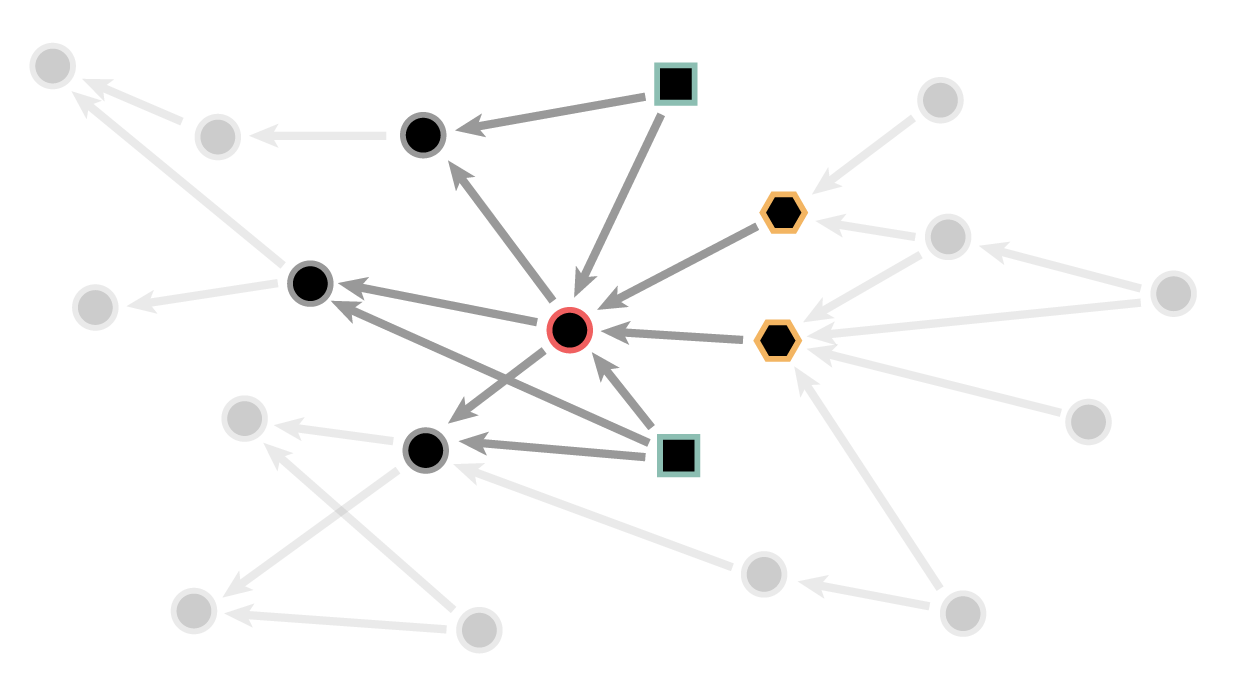}
    \caption{Left: the CD Index neighborhood subgraph $\Nbd{v}$. Right: the ``no-$k$'' CD Index neighborhood subgraph $\Nbdnok{v}$. The focal paper $v$ is denoted by a red border. $I$-type nodes are denoted by hexagons. $J$-type nodes are denoted by squares. $K$-type nodes are denoted by triangles.}
    \label{fig:neighborhood_subgraph}
\end{figure}
We can equivalently define Equation~\ref{eq:cd_neighborhood} in terms of its vertex and edge sets:
\begin{align*}
    V(\Nbd{v}) &= V(\Din{v}) \cup V(\Dout{v}) \cup V(\Kcd{v}) \\
    E(\Nbd{v}) &= E(\Din{v}) \cup E(\Dout{v}) \cup E(\Kcd{v}).
\end{align*}

Removing this $\Kcd{v}$ term from the graph union, we can define the the ``no-$k$'' neighborhood subgraph $\Nbdnok{v}$ centered at $v$ as the following vertex and edge sets:
\begin{align*}
     V(\Nbdnok{v}) &= V(\Din{v}) \cup V(\Dout{v}) \\
     E(\Nbdnok{v}) &= E(\Din{v}) \cup E(\Dout{v}) \cup \{(u,w) \mid u \in V(\Din{v}), w \in V(\Dout{v}) \}.
\end{align*}
In other words, $\Nbdnok{v}$ is composed of the union of the in- and out-subgraphs induced by $v$, but also includes the edges between these two subgraphs. 
The subgraph $\Nbdnok{v}$ forms the basis for the ``no-$k$'' CD Index $\CDnok$ formulation given in \cite{bornmann2020disruption}.
Note that the subgraphs $\Nbd{v}$ and $\Nbdnok{v}$ are both similar to the (1-hop) ego subgraph $\overline{\mathcal{N}}_1(v)$ (Figure~\ref{fig:k_hop_neighborhoods}), in that all of these graphs contain the union $\Din{v} \cup \Dout{v}$. 
However, $\Nbdnok{v}$ lacks any edges between nodes within $V(\Din{v})$, in addition to lacking edges between nodes within $V(\Dout{v})$.
The subgraph $\Nbd{v}$ also lacks these edges, and contains additional nodes from $\mathcal{K}(v)$. 

With these neighborhood graph definitions, we can now provide definitions of the CD Index measures $\CD$ and $\CDnok$. 
Both of these measures are contingent on a labeling of the nodes in the $\Din{v}$ subgraph of $\Nbd{v}$. 
\begin{definition}\label{def:itype}
    Given a CD Index neighborhood subgraph $\Nbd{v}$ of node $v$, define the \emph{$I$-type nodes} of $\Nbd{v}$ as the set $I(v) = \{u \in V(\Din{v}) \mid V(\Dout{u}) \cup \{v\} \}$. 
\end{definition}
\noindent In other words, the $I$-type nodes of $\Nbd{v}$ are those which cite \emph{only} $v$ within $\Nbd{v}$. 
\begin{definition}\label{def:jtype}
    Given a CD Index neighborhood subgraph $\Nbd{v}$ of node $v$, define the \emph{$J$-type nodes} $\Nbd{v}$ as the set $J(v) = \{u \in V(\Din{v}) \mid \degout{u} > 1\} = \{u \in V(\Din{v}) \setminus I(v) \cup \{v\} \}$
\end{definition} 
Here $\degout{u} = |\Dout{u}|-1$ is the out-degree of node $u$.   
The papers in $J(v)$ are those which cite both $v$ and at least one of the papers that $v$ cites, forming the complement of $I(v)$ with respect to the set of papers citing $v$.
Note that prior two definitions apply also to $\Nbdnok{v}$, but the following does not.
\begin{definition}\label{def:ktype}
    Given a CD Index neighborhood subgraph $\Nbd{v}$ of node $v$, define the \emph{$K$-type nodes} of $\Nbd{v}$ as the set $K(v) = \{u \in V(\Kcd{v}) \mid u \not\in V(\Din{v})\} = V\left(\Nbd{v}\right) \setminus \left(V(\Din{v}) \cup V(\Dout{v})\right)$. 
\end{definition}
\noindent The set $K(v)$ accounts for the papers which cite the papers that $v$ cites, but do not cite $v$ directly.
Given the node labelings in Definitions~\ref{def:itype},~\ref{def:jtype},~\ref{def:ktype}, we can now define the CD Index measures $\CD$ and $\CDnok$ as ratios of the magnitudes of these sets.
\begin{definition}\label{def:cd_index}
    Given a neighborhood graph $\Nbd{v}$ derived from citation graph $\Cnet = (V,E)$ as defined in Definition~\ref{def:cd_neighborhood}, let $n_I(v) = |I(v)|$, $n_J(v) = |J(v)|$, and $n_K(v) = |K(v)|$ be the sizes of the three node sets defined above. The \emph{CD Index} $\CD: V \to [-1,1]$ is given by:
    \begin{align}\label{eq:cdindex}
        \CD(v) &= \frac{n_I(v) - n_J(v)}{n_I(v) + n_J(v) + n_K(v)} \\
        &= \frac{n_I(v) - n_J(v)}{n_V(v)-1}\nonumber
    \end{align}
\end{definition}
\noindent  where $n_V(v) = |V(\Nbd{v})|$ is the number of nodes in the neighborhood graph. 
\begin{definition}\label{def:cdindex_nok}
    Given a neighborhood graph $\Nbd{v}$ derived from citation graph $\Cnet$ as defined in Definition~\ref{def:cd_neighborhood}, let $n_I(v) = |I(v)|$ and $n_J(v) = |J(v)|$ be the sizes of the node sets defined above. The \emph{no-$k$ CD Index} $\CDnok: V(\Cnet) \to [-1,1]$ is given by:
    \begin{align}\label{eq:cdindex_nok}
        \CDnok(v) &= \frac{n_I(v) - n_J(v)}{n_I(v) + n_J(v)} \\
        &= \frac{n_I(v) - n_J(v)}{\degin{v}}\nonumber
    \end{align}
\end{definition}
where $\degin{v} = |\Din{v}|-1$ is the in-degree of node $v$.

Intuitively, $\CD$ measures the propensity of paper $v$ to attract citations to itself, drawing attention away from the work which came before it, relative to the total number of citations received by $v$ and the papers it cites. 
The no-$k$ CD Index $\CDnok$ measures a similar propensity, but focuses only on citations made among nodes that cite or are cited by $v$. 
Note that $\CD$ and $\CDnok$ satisfy Properties~\ref{enum:temporal_ordering},~\ref{enum:continuity}, and~\ref{enum:citation_degree} but fail to satisfy Property~\ref{enum:indirect_influence} due to the one-hop definition of $\Nbd{v}$. 
We will return to this observation in Section~\ref{sec:generalizing_disruption}.

A paper $v^*$ which maximizes the CD Index such that $\CD(v^*) = 1$ is one which has no $J$- or $K$-type citations: all subsequent works either cite $v^*$ or ignore $v^*$ and its cited work entirely. 
Such a $v^*$ would be a \emph{maximally-disruptive} paper according to the CD Index.
By contrast, a paper $v^-$ which minimizes the CD Index is one with only $J$-type citations: all of its subsequent citations cite both itself and at least one of its cited papers. 
Such a $v^-$ would be a \emph{maximally-consolidating} paper according to the CD Index.
Note that $v^*$ and $v^-$ also maximize and minimize $\CDnok$.

The CD  Index $\CD(v)$ is inherently measuring the importance of node $v$ with respect to the citation neighborhood $\Nbd{v}$ by defining a ratio of node types among the papers citing $v$. 
As discussed in Section~\ref{sec:centrality}, network centrality also provides a framework for measuring importance of nodes embedded within networks. 
We will now show that this relationship between the importance measured by the CD Index and the importance measured by betweenness centrality are equivalent up to a translation in their ranges.  

\subsection{CD Index as Betweenness Centrality}\label{sec:cd_as_betweenness}

Recall that we are interested in the importance of nodes over a citation graph $\Cnet$, which introduces a number of structural constraints.
In particular, we wish to compute the centrality of a node $v$ with respect to $\Nbd{v}$, which has a very particular structure and leads to the following proposition:
\begin{proposition}\label{prop:betweenness_simplification}
    Evaluated over the CD Index neighborhood graph $\Nbd{v}$ as described in Definition~\ref{def:cd_neighborhood}, the betweenness centrality $\BC(v)$ of node $v$ may be computed as the (normalized) count of paths passing through $v$:
    \begin{equation*}
        \BC(v) = \frac{1}{p}\sum\limits_{s \in V \setminus \{v\}} \sum\limits_{t \in V \setminus \{v\}} \shortestp(s,t \mid v)
    \end{equation*}
\end{proposition}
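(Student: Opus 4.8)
The plan is to show that the normalizing denominator $\shortestp(s,t)$ in Definition~\ref{def:betweenness_centrality} equals $1$ on exactly those pairs $(s,t)$ that actually contribute to the sum, so that each summand $\shortestp(s,t\mid v)/\shortestp(s,t)$ collapses to $\shortestp(s,t\mid v)$. Concretely, it suffices to prove that whenever $\shortestp(s,t\mid v)>0$ we have $\shortestp(s,t)=1$; for every remaining pair both expressions vanish, and the two displayed sums then agree term by term. Everything therefore reduces to understanding the directed structure of $\Nbd{v}$ and, in particular, the geodesics that pass through $v$.

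First I would read off the layered ``bow-tie'' structure of $\Nbd{v}$ directly from the edge sets in Definition~\ref{def:cd_neighborhood}. The only edges incident to $v$ are the in-edges from $V(\Din{v})$ (the $I$- and $J$-type nodes) supplied by $E(\Din{v})$ and the out-edges to $V(\Dout{v})$ supplied by $E(\Dout{v})$. Moreover $E(\Kcd{v})$ consists solely of edges directed \emph{into} the nodes of $V(\Dout{v})$, so every node of $V(\Dout{v})$ is a sink in $\Nbd{v}$, while the $I$-, $J$-, and $K$-type nodes carry no in-edges and are sources. Consequently any directed path that visits $v$ must enter $v$ from a node $s\in I(v)\cup J(v)$ and leave $v$ to a node $t\in V(\Dout{v})$, giving the length-two walk $s\to v\to t$; since no node can both reach $v$ and be reached from $v$, no longer path through $v$ can ever be a geodesic.

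It then remains to argue uniqueness: for each pair with $\shortestp(s,t\mid v)>0$ the path $s\to v\to t$ is the \emph{only} shortest $s$--$t$ path, whence $\shortestp(s,t)=\shortestp(s,t\mid v)=1$. If $s$ already cites $t$ directly (possible only for $s\in J(v)$), then the geodesic has length one and does not pass through $v$, so $\shortestp(s,t\mid v)=0$ and the pair contributes nothing to either sum. Otherwise the shortest path has length two, and any competing length-two path $s\to x\to t$ would require an intermediate node $x\neq v$ with an out-edge into $t$; but the only non-$v$ out-neighbors of $s$ lie in $V(\Dout{v})$, which are sinks, so no such $x$ exists. Hence the geodesic through $v$ is unique and $\shortestp(s,t)=1$. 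I expect the main obstacle to be precisely this bookkeeping of sources, sinks, and alternative geodesics: care is needed to confirm from the edge-set definitions that $V(\Dout{v})$ carries no out-edges and that the citing nodes carry no in-edges, since it is exactly this source/sink dichotomy that forces every geodesic through $v$ to have length two and to be unique.
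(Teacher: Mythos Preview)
Your proposal is correct and follows exactly the paper's approach: the paper's proof is a single sentence asserting that, by the structure of $\Nbd{v}$, there is at most one shortest $s$--$t$ path whenever one passes through $v$ (so $\shortestp(s,t)=1$ in that case), and you have simply fleshed out the source/sink bookkeeping that the paper leaves implicit. Your anticipated obstacle is the right one, and the paper confirms your reading of the edge sets by stating explicitly (just after Definition~\ref{def:cd_neighborhood}) that $\Nbd{v}$ contains no edges among the nodes of $V(\Dout{v})$, which is precisely what makes those nodes sinks.
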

\begin{proof} 
By definition of $\Nbd{v}$, there is never more than one shortest path between $s$ and $t$ whenever one exists which passes through $v$, thereby making the denominator of Definition~\ref{def:betweenness_centrality} $\sigma(s,t)=1$ whenever $\sigma(s,t \mid v) \not= 0$. 
\end{proof}

Proposition~\ref{prop:betweenness_simplification} implies that we must only concern ourselves with counting the paths passing through $v$ in order to calculate betweenness centrality over $\Nbd{v}$. 
As an immediate corollary, it is easy to see that the vertex $v^*$, which induces subgraph $\Nbd{v^*}$ maximizing $\BC$, also maximizes $\CD$: papers citing $v^*$ only cite $v^*$ and not its cited papers. 

The normalization factor $p$ in the definition of betweenness is typically chosen in such a way as to make $\BC(v)$ comparable across all possible choices of underlying graph. 
For example, on an arbitrary directed network with $n$ nodes, choosing $p = (n-1)(n-2)$ accounts for all possible choices of directed edges between all pairs of nodes excluding the measured node $v$.
However, many of the possible edges enumerated by this combinatorial choice of $p$ are not realizable within $\Nbd{v}$, so a tighter normalization constant is available. 
\begin{proposition}\label{prop:betweenness_normalization}
    Evaluated over the CD Index neighborhood graph $\Nbd{v}$ as described in Definition~\ref{def:cd_neighborhood}, the normalization constants
    \begin{align*}
        \pnormnok(v) &= \degin{v} \degout{v} \\
        &= \degout{v}(n_I(v) + n_J(v))
    \end{align*}
    and 
    \begin{align*}
        \pnorm(v) &= \degout{v} \cdot |V(\Nbd{v}) \setminus \left(V(\Din{v}) \cup V(\Dout{v})\right)| \\
        &= \degout{v}(\degin{v} + n_K(v)) \\
        &= \degout{v}(n_I(v) + n_J(v) + n_K(v)) \\
    \end{align*}
    both normalize $\BC(v)$ to the range $[0,1]$. We denote the betweenness centrality normalized by $\pnorm$ by $\BCnorm$ and the beweenness centrality normalized by $\pnormnok$ by $\BCnormnok$. 
\end{proposition}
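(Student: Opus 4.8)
The plan is to reduce the statement to a single upper bound on the unnormalized count of shortest paths through $v$ and then exhibit that count's combinatorial ceiling. By Proposition~\ref{prop:betweenness_simplification}, over $\Nbd{v}$ we have $\BC(v) = \frac{1}{p}\sum_{s,t}\shortestp(s,t\mid v)$ with each summand lying in $\{0,1\}$ (at most one shortest path through $v$ ever exists), so the double sum is exactly the number of ordered pairs $(s,t)$ whose unique shortest path passes through $v$. The lower bound $\BC(v)\ge 0$ is immediate for any positive $p$, since we are dividing a nonnegative count, so all of the content lies in showing $\sum_{s,t}\shortestp(s,t\mid v)\le p$ for each of the two proposed constants (we restrict to the nondegenerate case $\degin{v},\degout{v}\ge 1$, as otherwise every quantity in sight is $0$).

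First I would pin down exactly which pairs can contribute. Any shortest path through $v$ must enter $v$ along an edge of $\Din{v}$ and leave along an edge of $\Dout{v}$, and I would argue that its source must therefore be an in-neighbor of $v$ and its terminus an out-neighbor of $v$. Granting this, the number of contributing pairs is bounded by the number of (in-neighbor, out-neighbor) pairs, namely $\degin{v}\,\degout{v}$, which is precisely $\pnormnok(v)=\degout{v}(n_I(v)+n_J(v))$; hence $\BCnormnok(v)\in[0,1]$. For the second constant I would simply observe that $\pnorm(v)=\degout{v}(\degin{v}+n_K(v))\ge \degout{v}\,\degin{v}=\pnormnok(v)$ because $n_K(v)\ge 0$, so the identical numerator is divided by something at least as large and $\BCnorm(v)\in[0,1]$ follows at once. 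I would also note, for tightness, that $\BCnormnok(v)=1$ is attained exactly when every in-neighbor cites only $v$ (the all-$I$-type, maximally disruptive configuration), with the extra condition $n_K(v)=0$ needed for $\BCnorm(v)=1$.

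The main obstacle is the structural step asserting that sources are in-neighbors and termini are out-neighbors, which must be justified from acyclicity of $\Cnet$ together with the precise edge sets of $\Nbd{v}$. Concretely, I would show that no edge of $\Nbd{v}$ points into an in-neighbor of $v$: every edge of $\Din{v}$, $\Dout{v}$, and $\Kcd{v}$ targets either $v$ or an out-neighbor of $v$, and an in-neighbor that were also such a target would be simultaneously cited by and citing $v$, forcing a $2$-cycle. Thus in-neighbors have no incoming edges within $\Nbd{v}$, so any through-path must originate at one of them. Dually, I would show that every node reachable from $v$ along outgoing edges is itself an out-neighbor: within $\Kcd{v}$ an out-neighbor can only cite other out-neighbors, and by acyclicity it can never reach $v$, its in-neighbors, or the $K$-type nodes. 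Once this containment is in hand, longer candidate through-paths are never geodesics and the contributing pairs are genuinely confined to in-neighbor $\times$ out-neighbor, so the count is at most $\degin{v}\,\degout{v}$ and both normalizations land in $[0,1]$.
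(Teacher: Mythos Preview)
Your proposal is correct and arguably more rigorous than the paper's own proof, though the strategy differs. The paper argues extremally: it asserts that the maximizing neighborhood is the one in which every in-neighbor cites only $v$ and $n_K(v^*)=0$, then computes the normalized value at that configuration to be $1$. You instead prove a uniform upper bound $\sum_{s,t}\shortestp(s,t\mid v)\le \degin{v}\,\degout{v}$ valid for \emph{every} configuration of $\Nbd{v}$, by structurally characterizing the sources and termini of through-paths (in-neighbors have no incoming edges in $\Nbd{v}$, and from $v$ one can only reach out-neighbors). What your route buys is that the upper bound is actually justified rather than asserted, and the second normalization falls out immediately from the monotonicity $\pnorm(v)\ge\pnormnok(v)$ rather than requiring a separate discussion of why $K$-type nodes never help. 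What the paper's route buys is brevity, at the cost of leaving the maximality of the all-$I$-type configuration unproved. Both arrive at the same tight case for attainment of $1$, and your explicit handling of the degenerate case $\degin{v}\,\degout{v}=0$ is a detail the paper omits.
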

\begin{proof}
    This is a multiplicative normalization, so the lower bound of $\BC$ is unchanged at 0 when normalizing. 
    Thus, it suffices to show that the $v^*$ which induces the subgraph maximizing $\BCnorm$ is equal to $1$ under this normalization scheme. 
    As discussed above, the maximal $v^*$ is that which induces $\Nbd{v^*}$ that has all paths between any possible $s$ and $t$ flowing through $v^*$. 
    Any $K$-type nodes, by definition, do not pass through $v^*$, so they can be ignored at the maximum value ($n_K(v^*) = 0)$.
    This means the optimal normalization constant is $\pnorm(v^*) = \pnormnok(v^*) = \degout{v^*}\degin{v^*}$ independent of the inclusion of $K$-type terms, and we must only consider the maximal value of $\Nbdnok{v^*}$. 
    The maximum betweenness centrality value is achieved by the neighborhood graph which has all in-nodes to $v^*$ connected only to $v^*$. 
    In such a scenario,
    \begin{align*}
        \BCnorm(v^*) = \BCnormnok(v^*) &= \frac{1}{\pnormnok(v^*)}\sum\limits_{s \in V \setminus \{v^*\}} \sum\limits_{t \in V \setminus \{v^*\}} \shortestp(s,t \mid v^*) \\
        &= \frac{1}{\pnormnok(v^*)} \sum\limits_{s \in V \setminus \{v^*\}} \degout{v^*} \\
        &= \frac{1}{\pnormnok(v^*)} \degout{v^*} \degin{v^*} \\
        &= 1
    \end{align*} 
\end{proof}

As shown in Proposition~\ref{prop:betweenness_normalization}, the $[0,1]$-normalizing constant for graphs of type $\Nbd{v}$ ($\Nbdnok{v}$) is the denominator of the CD Index (no-$k$ CD Index) scaled by $\degout{v}$, and $\CD(v^*) = \BCnorm(v^*) = 1$. 
This observation motivates us to consider to what extent the CD Index and betweenness centrality are related. 
Careful observation of Definitions~\ref{def:betweenness_centrality} and~\ref{def:cd_index} implies that the CD Index is not simply a scaled version of betweenness centrality, as $\CD$ is a difference of label counts over the citing nodes, whereas $\BC$ is a ratio of path counts. 
We can rewrite $\BCnorm$ to further clarify this observation. 
Letting $\pnorm(v) = \degout{v}(n_I(v) + n_J(v) + n_K(v))$ and $T(v) = V(\Dout{v}) \setminus \{v\}$,
\begin{align}
    \BCnorm(v) &= \frac{1}{\pnorm(v)} \left(\sum\limits_{s_i \in I(v)} \sum\limits_{t \in T(v)} \shortestp(s_i,t \mid v) + \sum\limits_{s_j \in J(v)}\sum\limits_{t \in T(v)} \shortestp(s_j,t \mid v) + \sum\limits_{s_k \in K(v)} \sum\limits_{t \in T(v)} \shortestp(s_k,t \mid v)\right)\nonumber \\
    &= \frac{1}{\pnorm(v)} \left(\degout{v}n_I(v) + \sum\limits_{s_j \in J(v)}(\degout{v} - \degout{s_j} + 1) \right)\nonumber \\
    &= \frac{1}{\pnorm(v)} \left(\degout{v}n_I(v) + \degout{v}n_J(v)+ (n_J(v) - \sum\limits_{s_j \in J(v)}\degout{s_j} \right)\nonumber \\
    &= \frac{1}{\pnorm(v)} \left(\degout{v}(n_I(v) + n_J(v)) + n_J(v) - \sum\limits_{s_j \in J(v)}\degout{s_j}) \right)\nonumber \\
    &= \frac{n_I(v) + n_J(v)}{n_I(v) + n_J(v) + n_K(v)} + \frac{n_J(v) - \sum\limits_{s_j \in J(v)} \degout{s_j} }{\degout{v}(n_I(v) + n_J(v) + n_K(v))}\nonumber \\ 
    &= \frac{\degin{v}}{\degin{v} + n_K(v)} - \frac{\sum\limits_{s_j \in J(v)} (\degout{s_j} - 1)}{\degout{v}(\degin{v} + n_K(v))}\label{eq:betweenness_in_cd_notation}.
\end{align}
and by extension when $n_K(v) = 0$:
\begin{equation}\label{eq:betweennessnok_in_cd_notation}
    \BCnormnok(v) = 1 - \frac{\sum\limits_{s_j \in J(v)} (\degout{s_j} - 1) }{\degout{v}\degin{v}}
\end{equation}
Thus, the betweenness centrality $\BCnormnok(v)$ in $\Nbdnok{v}$ simplifies over this neighborhood graph to a measurement of the relative number of paths in the graph emanating from $J$-type nodes and routing around $v$ to its cited papers. When extended to graph types $\Nbd{v}$ with $K$-type nodes, the resulting value $\BCnorm(v)$ (Equation~\ref{eq:betweenness_in_cd_notation}) becomes the proportion of possible shortest paths that can pass through $v$ minus the number of paths which route around $v$, normalized by the number of possible paths. 

Although they are similar, $\BCnorm$ is not equivalent to $\CD$. 
In fact, $\BCnorm(v)$ and $\CD(v)$ cannot be forced into equality through mere scaling on $\Nbd{v}$ alone.
This is because $\BCnorm(v)$ is a ratio of the number of paths which emanate from $J$-type nodes and therefore has range $[0,1]$, whereas $\CD(v)$ ignores these paths, accounting instead for only the number of $J$-type nodes, resulting in a range $[-1,1]$.

\begin{figure}[htbp]
    \centering
    \includegraphics[width=0.49\textwidth]{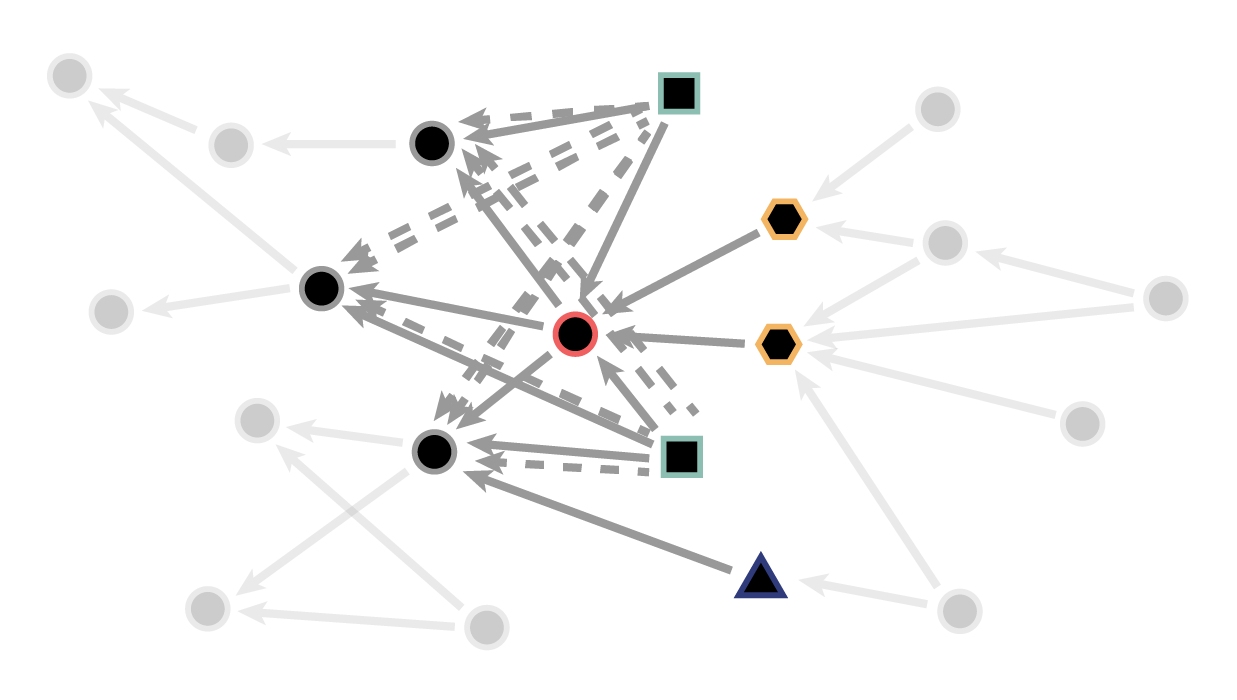}
    \includegraphics[width=0.49\textwidth]{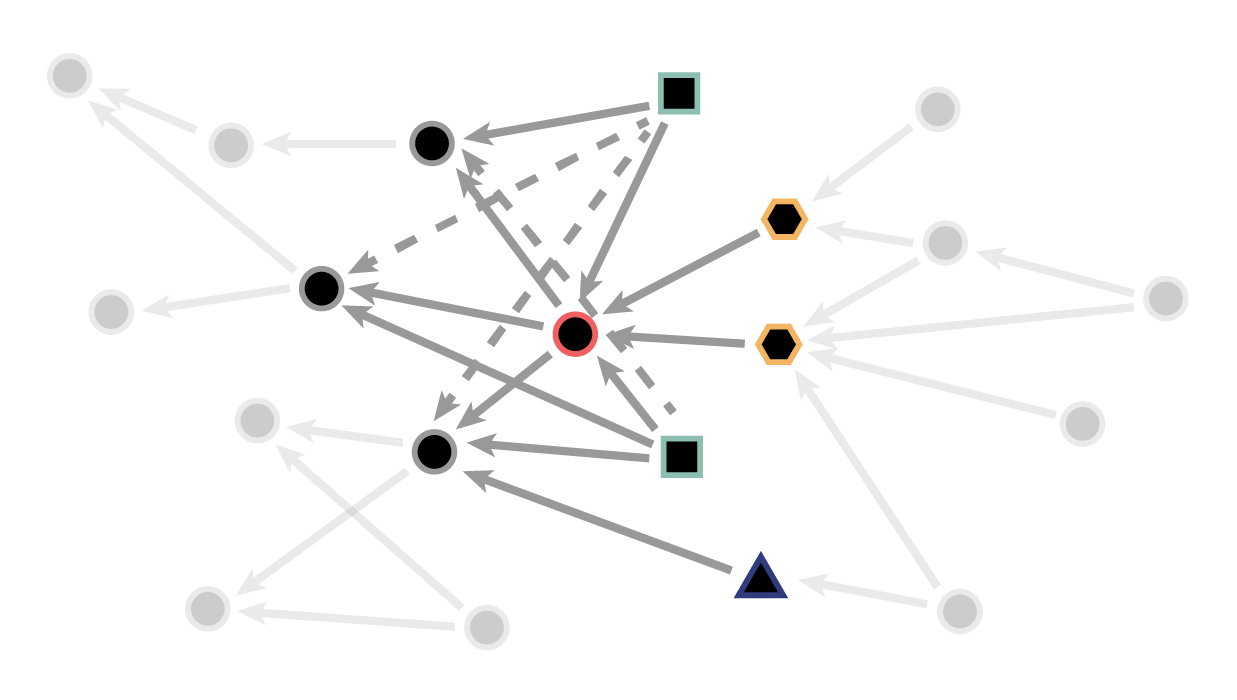}
    \caption{Left: The multi-graph structure which would force $\BCnorm(v) = \CD(v)$. Right: The implicit graph resulting from computing a shifted version of betweenness centrality on $\Nbd{v}$ which is equivalent to $\BCnorm(v) = \CD(v) + q(v)$.}
    \label{fig:multigraph_betweenness}
    \label{fig:shifted_betweenness}
\end{figure}

However, we can force $\BCnorm$ into equivalence with $\CD$ if we are allowed an additional constant to align the ranges. 
To see this, note that setting 
\begin{equation}\label{eq:jout_to_cd}
    \sum\limits_{s_j \in J(v)} \degout{s_j} = (2\degout{v}+1)n_J(v)
\end{equation}
brings Equation~\ref{eq:betweenness_in_cd_notation} into equivalence with $\CD(v)$:
\begin{align*}
    \BCnorm(v) &= \frac{\degin{v}}{\degin{v} + n_K(v)}  - \frac{\sum\limits_{s_j \in J(v)} (\degout{s_j} - 1) }{\degout{v}(\degin{v} + n_K(v)}  \\
    &= \frac{\degin{v}}{\degin{v} + n_K(v)}  - \frac{(2\degout{v} + 1)n_J(v) - n_J(v) }{\degout{v}(\degin{v} + n_K(v)} \\
    &= \frac{\degin{v}}{\degin{v} + n_K(v)}  - \frac{2n_J(v)}{\degin{v} + n_K(v)} \\
    &= \frac{n_I(v) + n_J(v)}{\degin{v} + n_K(v)}  - \frac{2n_J(v)}{\degin{v} + n_K(v)} \\
    &= \frac{n_I(v) - n_J(v)}{\degin{v} + n_K(v)} \\
    &= \CD(v).
\end{align*}

Equation~\ref{eq:jout_to_cd} shows that the CD Index is equivalent to computing the betweenness centrality over a neighborhood graph withing which the average out-degree of $J$-type nodes is $2\degout{v}+1$. 

However, no proper digraph with the structural constraints of $\Nbd{v}$ exists that can satisfy Equation~\ref{eq:jout_to_cd} due to the fact that each node that cites $v$ can out-degree at most $\degout{v}+1$. 
This structural impossibility comes from the fact that $\BCnorm$ is a ratio of paths, whereas $\CD$ is a shifted count of vertices. 
To align these two measures, we can either relax our structural constraints on $\Nbd{v}$ and view it as a multidigraph which has two edges between each $J$-type node and each node cited by $v$, or we can re-shift $\BCnorm(v)$ to align with $\CD(v)$. 
To find this additive constant $q(v)$ for the latter case, set
\begin{align*}
    \BCnorm(v) &= \CD(v) + q(v) \\
\end{align*} and solve for $q(v)$:
\begin{align*}
    q(v) &= \BCnorm(v) - \CD(v) \\
    q(v) &=  \frac{\degin{v}}{\degin{v} + n_K(v)} - \frac{\sum\limits_{s_j \in J(v)} (\degout{s_j} - 1)}{\degout{v}(\degin{v} + n_K(v))} - \frac{n_I(v) - n_J(v)}{\degin{v} + n_K(v)} \\
    q(v) &= \frac{2n_J(v)}{\degin{v} + n_K(v)} - \frac{\sum\limits_{s_j \in J(v)} (\degout{s_j} - 1)}{\degout{v}(\degin{v} + n_K(v))}  
\end{align*}
Thus, we see that the addition $q(v)$ bringing the CD Index and betweenness into alignment is one which re-aligns their ranges by adding $2n_J(v)/(\degin{v} + n_K(v))$ to $\CD(v)$ then deflates $\CD(v)$ by the number of paths emanating from $J$-type nodes which still pass through $v$.  
If we take $\sum_{s_j \in J(v)} (\degout{s_j}-1) = \degout{v}n_J(v)$ to be the maximum possible out-degree supported by our assumptions on the neighborhood graph structure $\Nbd{v}$, we find
\begin{align*}
    q(v) &= \frac{2n_J(v)}{\degin{v} + n_K(v)} - \frac{\degout{v}n_J(v)}{\degout{v}(\degin{v} + n_K(v))}  \\
    &= \frac{n_J(v)}{\degin{v} + n_K(v)}.
\end{align*}

In other words, \textbf{computing the CD Index $\CD(v)$ is equivalent to computing the betweenness centrality} $\BCnorm(v)$ \textbf{on the graph} $\Nbd{v}$ \textbf{where all $J$-type papers in} $\Nbd{v}$ \textbf{cite \emph{all} of $v$'s cited nodes} minus a constant which aligns their ranges by accounting for the proportion of $J$-type nodes in the graph.
Equivalently, on a neighborhood graph wherein all $J$-type nodes connect to each of $v$'s out-neighbors, 
$\BCnorm(v) = n_I(v)/(\degin{v} + n_K(v))$ which measures the proportion of $I$-type nodes to total nodes which is equivalent to the $DI^*$ reformulation of the CD Index presented in \citet{leydesdorff2021proposal} and \citet{chen2021destabilization}. 

\section{Generalizing Disruption Measures via Centrality}~\label{sec:generalizing_disruption}

The relationship the CD Index and betweenness centrality leads to a number of exciting implications regarding the measurement of disruption.
As noted in Section~\ref{sec:cd_index}, the CD Index fails to satisfy Property~\ref{enum:indirect_influence} due to the fact that it is only defined over $\Nbd{v}$ which is composed of only the immediate predecessor and successor works of $v$.
This specificity in citation context can lead to issues in accurately measuring disruption.
For example, if future works do not directly cite paper $v$ and instead attribute better-known or more refined follow-up works of $v$, $\CD(v)$ will be blind to these indirect attributions.
Similarly, if multiple papers compose a disruptive stream of work wherein each subsequent paper builds upon and eclipses the last in relevance, the most recent work is likely to garner the lion's share of disruption as measured by the $\CD(v)$, even though each earlier work composes a piece of the disruptive whole.

Because they may be defined with respect to arbitrarily-sized neighborhood graphs, centrality measures are sensitive to this form of indirect influence and therefore offer a theoretical basis for reasoning about these indirect influences and expanding the notion of disruption to account for such behavior. 
As detailed in Section~\ref{sec:centrality}, betweenness and Pagerank are defined with respect to arbitrary-sized neighborhood subgraphs and therefore satisfy Property~\ref{enum:indirect_influence} by default. 
The relationship between betweenness and the CD Index derived in the previous section motivates one to consider disruption indices constructed from subgraphs of the focal paper of various size and structure, up to the entire ambient citation network. 
Although work has begun to emerge towards this end~\citep{yang2023disruptive}, the heretofore lack of network-theoretic grounding leads to ambiguity in the implementation and properties of the resulting multi-hop measure. 
The centrality framework for disruption presented in this work provides a much more direct route for such measurement generalizations. 

\begin{figure}[htbp]
    \centering
    \includegraphics[width=0.49\textwidth]{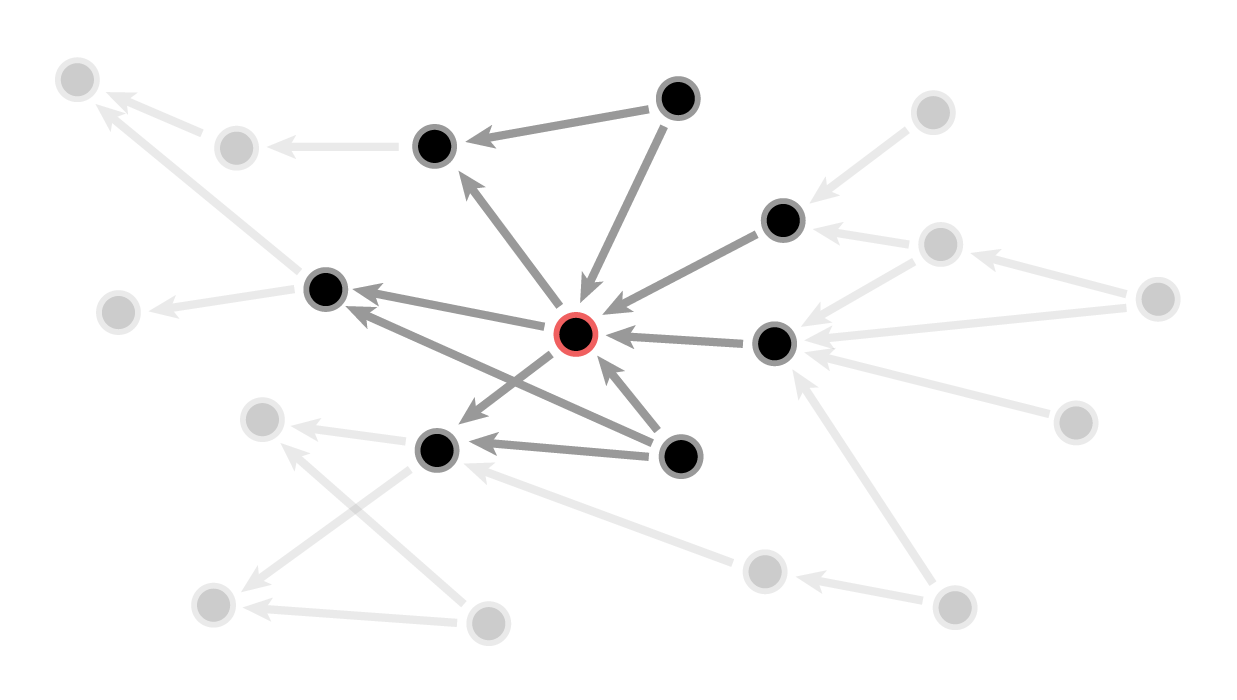}
    \includegraphics[width=0.49\textwidth]{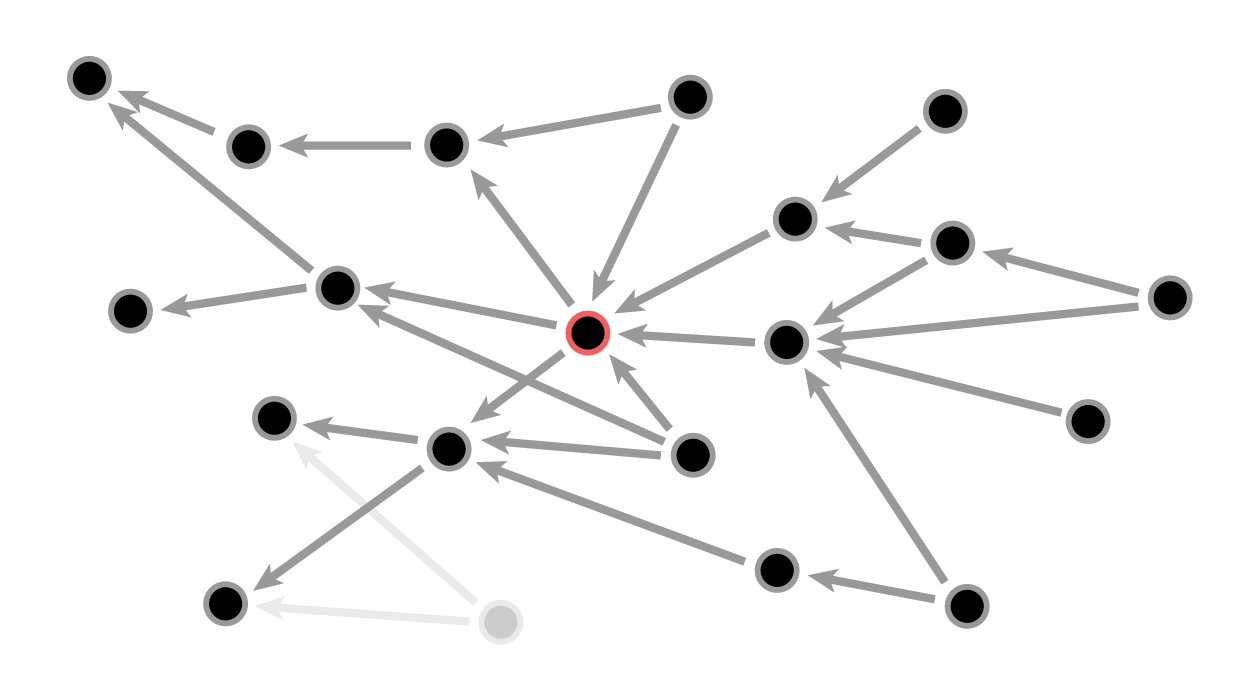}
    \caption{Left: 1-hop ego subgraph $\overline{\mathcal{N}}_1(v)$ around $v$ (red). Right: two-hop ego subgraph $\overline{\mathcal{N}}_2(v)$ around $v$ (red).}
    \label{fig:k_hop_neighborhoods}
\end{figure}

This relationship between centrality measures and disruption also highlights the importance of measurement semantics regarding node importance as measured by centrality. 
As observed in Section~\ref{sec:cd_as_betweenness}, the CD Index counts $J$-type nodes the same, regardless of whether they cite all of the focal paper's prior work or a single prior work. 
By contrast, betweenness is sensitive to the number of paths which route from successors to predecessors of the focal paper.
Pagerank's random walk semantics are also sensitive to paths between citing and cited papers, though the randomness softens the sensitivity to the path length of these walks. 
Such measurement semantics must be taken into account when choosing a disruption measure, whether it be the CD Index, betweenness, Pagerank, or some other centrality measure extended to the disruption context. 

Note that these centrality definitions of disruption also provide computational benefits when measuring disruption over large citation networks. 
Betweenness centrality is typically computed by running an all-pairs shortest path algorithm which benefits from the compositionality of geodesic distances when run on the entire graph at once. 
The dynamic nature of this computation, when computed over the entire network at once, provides significant computational savings as one can avoid computing shortest paths along the same path multiple times for each choice of focal paper. 
This is in contrast to the CD Index and hop-based disruption measures which must compute disruption measures over each subgraph independently, without borrowing information from past computations within the citation network. 
The Pagerank algorithm enjoys similar computational benefits to betweenness when computed on the entire citation network at once, and the Eigenvalue problem in Equation~\ref{eq:pagerank} can be efficiently computed using a power iteration method which allows for arbitrary precision, although convergence issues may warrant consideration~\citep{langville2004deeper}.  
In addition, efficient implementations of these centrality-based algorithms exist across many software packages, and approximation algorithms for betweenness also exist~\citep{brandes2008variants}.

Finally, we note that this idea of using centrality to measure paper importance within citation networks is not new. 
Many past works have investigated the use of centrality measures--especially Pagerank--in highlighting important papers within scientific corpora~\citep{ma2008bringing,maslov2008promise,frahm2014google}, in addition to measuring the relevance of scientisits within their collaboration networks~\citep{senanayake2015pagerank}. 
By explicitly tying disruption to centrality in this work, we can both begin interpreting these past results within the context of disruption, and further extend the study of disruption to other areas of network social science through the shared language of centrality. 

\section{Measuring Disruption in Physics}

This section provides empirical validation for many of the claims made in the previous sections regarding the relationship between disruption and centrality measures. 
Using a 2021 snapshot of the American Physical Society (APS) bibliographic database, containing over 630,000 papers published in APS journals between the years 1893 and 2019, we derive citation networks based on the corresponding citation data.
For each year $t \in [1900,2010]$ of the data, we create a citation network $\Cnet_{t+h}$ representing the citation network of all papers published up to and including year $t+h$ where $h$ is an integer-valued lookahead time horizon. 
For the experiments below, we take $h \in \{5,10,300\}$ where $h=300$ is an ``all-time'' horizon which results in a citation network constructed from all papers in the database.

\begin{figure}[htbp]
    \centering
    \includegraphics[width=0.49\textwidth]{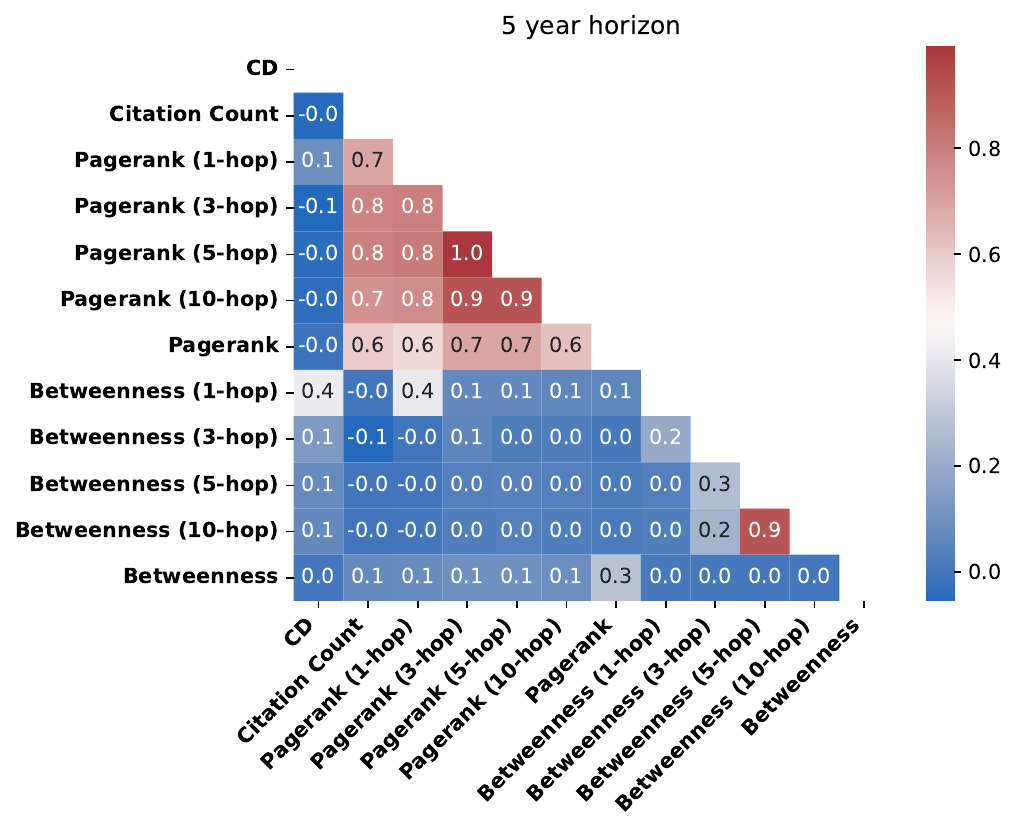}
    \includegraphics[width=0.49\textwidth]{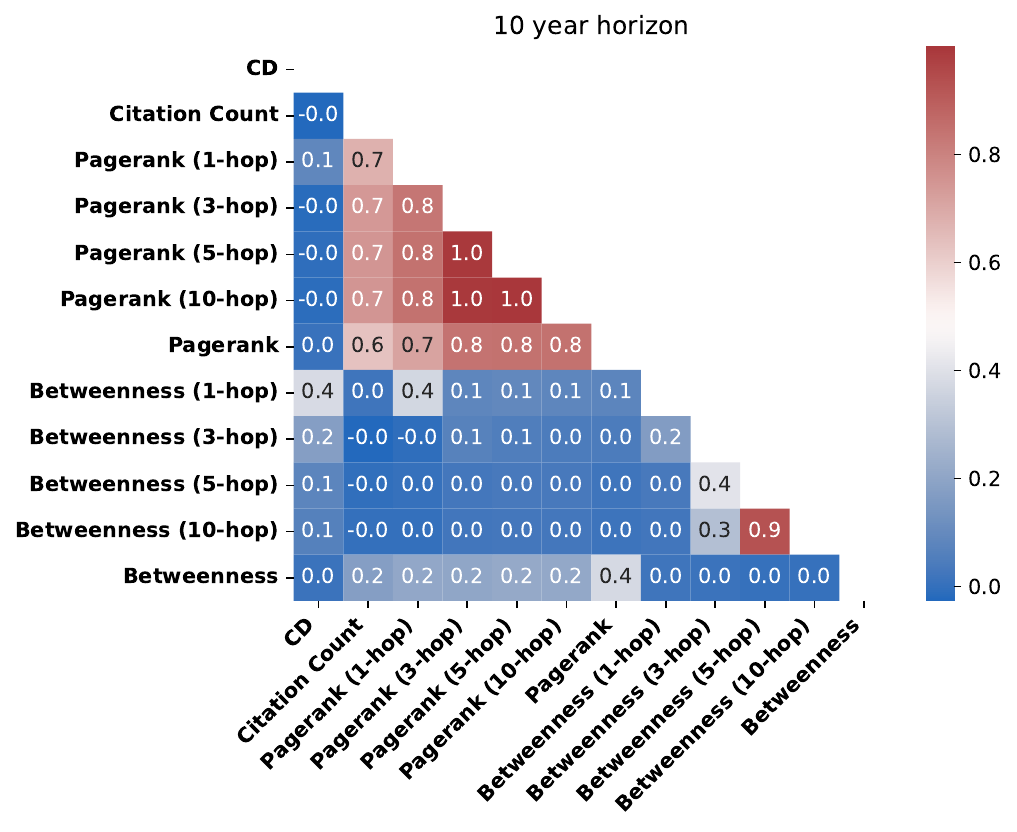}
    \caption{Correlation between centrality measures evaluated at time horizons $h=5$ (left) and $h=10$ (right). Centrality measures without the $k$-hop suffix are computed with respect to the entire network $\Cnet_{t+h}$.}
    \label{fig:correlation}
\end{figure}

For each year $t$ and each time horizon $h$, we compute each paper the citation count ($\IDC$), CD Index ($\CD$), betweenness centrality ($\BC_k$), and Pagerank ($\PR_k$) of each paper published in year $t$ with respect to $\Cnet_{t+h}$. 
For betweenness and Pagerank, we also compute these centrality measures for a range of $k$-hop ego subgraphs $\overline{\mathcal{N}}_k(v)$ for $k \in \{1,3,5,10\}$ in addition to computing centrality over the entire $\Cnet_{t+h}$ network irrespective of any node neighborhood (``all'').
Note again that $\overline{\mathcal{N}}_1(v)$ is not equivalent to $\Nbd{v}$ in general, so we should not expect $\BC_1$ to correlate perfectly with $\CD$. 
We normalize $\BC_k(v)$ by setting $p = (|V(\overline{\mathcal{N}}_k(v))|-1)(|V(\overline{\mathcal{N}}_k(v))|-2)$. 
We normalize $\PR_k(v)$ by dividing each Pagerank value by $\alpha/|V(\overline{\mathcal{N}}_k(v))|$, the lower bound of scores over the neighborhood graph $\overline{\mathcal{N}}_K(v)$. 
We set $\alpha = 0.1$ for Pagerank and let $\vec{\gamma} = \vec{1}|V(\overline{\mathcal{N}}_k(v))|^{-1}$.

\subsection{Correlation among Disruption Measures}

Figure~\ref{fig:correlation} displays the correlation between each disruption measure across 5- and 10-year time horizons and across various $k$-hop subgraphs. 
As expected, the correlation between $\BC_1$ and $\CD$ is substantial, though their disruption measurements are still distinct due to the difference between $\mathcal{N}^{\mathrm{CD}}$ and $\overline{\mathcal{N}}_1$.
Plotting $|\BC_1 - \CD|$ versus $\CD$ as in Figure~\ref{fig:CD-betweenness} provides further empirical evidence for the source of the measurement difference between the CD Index and 1-hop betweenness. 
Overwhelmingly, the papers with $\CD$ values near zero are those with divergent $\BC_1$ values. 

As noted in Section~\ref{sec:cd_as_betweenness}, this is expected due to the fact that a single citation to the focal paper's prior work reduces the value of the CD Index the same as if this paper cited all of the focal paper's prior work, while the same does not hold for betweenness. 
This discontinuity of the CD Index near $\CD=0$ has been noted in prior work~\citep{wu2019confusing}, and the analyses presented here further enforce the notion that disruption based on path measures may provide a more intuitive measure of disruption than those based on node counts. 

\begin{wrapfigure}{r}{0.5\textwidth}
    \begin{center}
    \includegraphics[width=\textwidth]{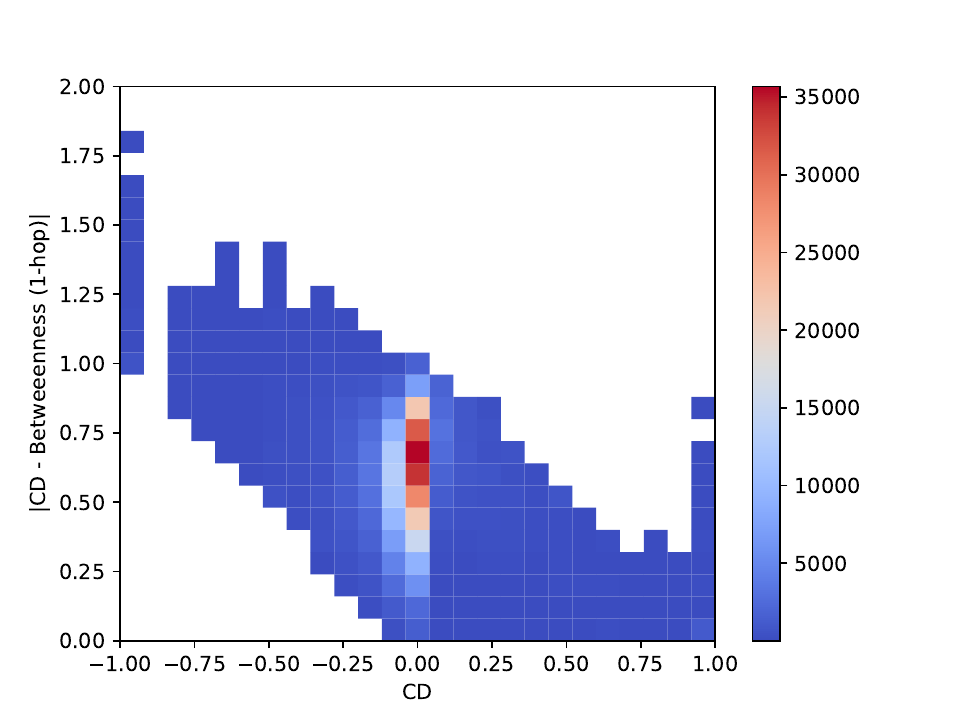}
    \end{center}
    \caption{Histogram of the difference between the CD Index and 1-hop betweenness across all papers in APS corpus versus the CD Index. The histogram is thresholded so each bin has at least 10 papers.}
    \vspace{-20pt}
    \label{fig:CD-betweenness}
\end{wrapfigure}

Besides $\BC_1$, the correlation between $\CD$ and all other disruption measures is limited. This suggests that each of these disruption measures captures varying facets of disruption with respect to each paper's neighborhood network.
The betweenness disruption measures $\BC_k$ diverge from $\CD$ as their neighborhood hop size $k$ increases. 
This divergence is expected, as the options for shortest routes between the past and future papers relative to the focal paper increases combinatorially with each additional hop. 
Intriguingly, the family of betweeenness measures also exhibits relatively low inter-family correlation across values of $k$ except for the pair $\BC_5$ and $\BC_{10}$.
This observation reinforces the concept that the geodesic distance underlying betweenness may exhibit discontinuous behavior when confronted with minor modifications to the structural composition of the neighborhood graph.

By contrast, $\PR_k$ yields disruption measures which display substantial inter-family correlation across all values of $k$, reflecting the intuition that disruption derived from random walks may be more robust to modifications to the structural composition of the neighborhood graph surrounding the focal paper.
Figure~\ref{fig:correlation} also indicates that Pagerank is, in general, highly correlated with citation count. 
This relationship is expected, as a higher proportion of nodes directly citing the focal paper implies a higher likelihood of visiting the focal paper along an arbitrary random walk on $\overline{\mathcal{N}}_k(v)$. 
This relationship is especially relevant when $k=1$, as $\PR_k(v)$ will be primarily driven by the number of papers citing $v$ unless each of these papers cite a significant amount of $v$'s prior work.

Taken together, the results of these correlation analyses reinforce the observation from Section~\ref{sec:generalizing_disruption} that the definition and measurement of node ``importance'' over the citation network is integral to the resulting semantics of these disruption measures.
The node-type importance of $\CD$ diverges from the geodesic betweenness measure of importance underlying $\BC_k$ as $k$ increases. 
Similarly, the random walk visitation importance underlying $\PR_k$ results in a disruption measure which is correlated to both Citation Count and $\BC_1$ when $k=1$, but then diverges from the latter measure with increasing $k$ until all hops are taken into account. 
Although each of these measures appears to highlight a distinct facet of disruption, we will observe in the next section that they do share similarity in their aggregate trends across time.

\subsection{Disruption Trends}

Since its introduction, the CD Index has seen frequent employment as a measure of disruptive outcomes with respect to particular structural variables relevant to the science of science like team size~\citep{wu2019large} or scientific discourse~\citep{lin2022new}. 
A recent study by ~\citet{park2023papers} featured the CD Index in an evaluation of the slowing pace of scientific and technological innovation. 
In this work, the authors tracked the yearly average value of the CD Index across time and found a generally decreasing trend in disruptiveness measured within citation networks across the sciences. 
Figure~\ref{fig:disruption_over_time} plots the yearly average disruption across papers and their correlations, smoothed by averaging across a centered 5-year window, evaluated at a 5-year time horizon ($h=5$).  
Similar to the observations in ~\citet{park2023papers}, we observe a generally downward trend in disruption through time across all alternative specifications of disruption apart from Citation Count and $\PR_1$.
This latter outlier similarity between 1-hop Pagerank and Citation Count is expected due to their close theoretical relationship stemming from a strong influence of direct citations on these measures.
As the neighborhood expands, however, we see the time series of Pagerank quickly begins to mirror that of $\CD$ and $\BC$. 
As the correlation plot suggests, these alternative disruption specifications are highly similar in terms of their aggregate measurements of disruption through time. 
The high variance of the average yearly value of betweenness across time is an interesting artifact deserving of further study. 
We hypothesize that these spikes may correspond to structural shifts in the topology of the citation network which, due to the geodesic distance underlying betweenness, results in large jumps in average betweenness across the network.
Such structural changes to the citation network topology may be scientifically meaningful and therefore deserve further study.
These results imply that even though the alternative measurements of disruption introduced in this paper may see low correlation to the CD Index at the paper level, these alternative specifications are still influenced by similar global trends in the structure of the citation network which play out through time. 

\begin{figure}[htbp]
    \centering
    \includegraphics[width=0.64\textwidth]{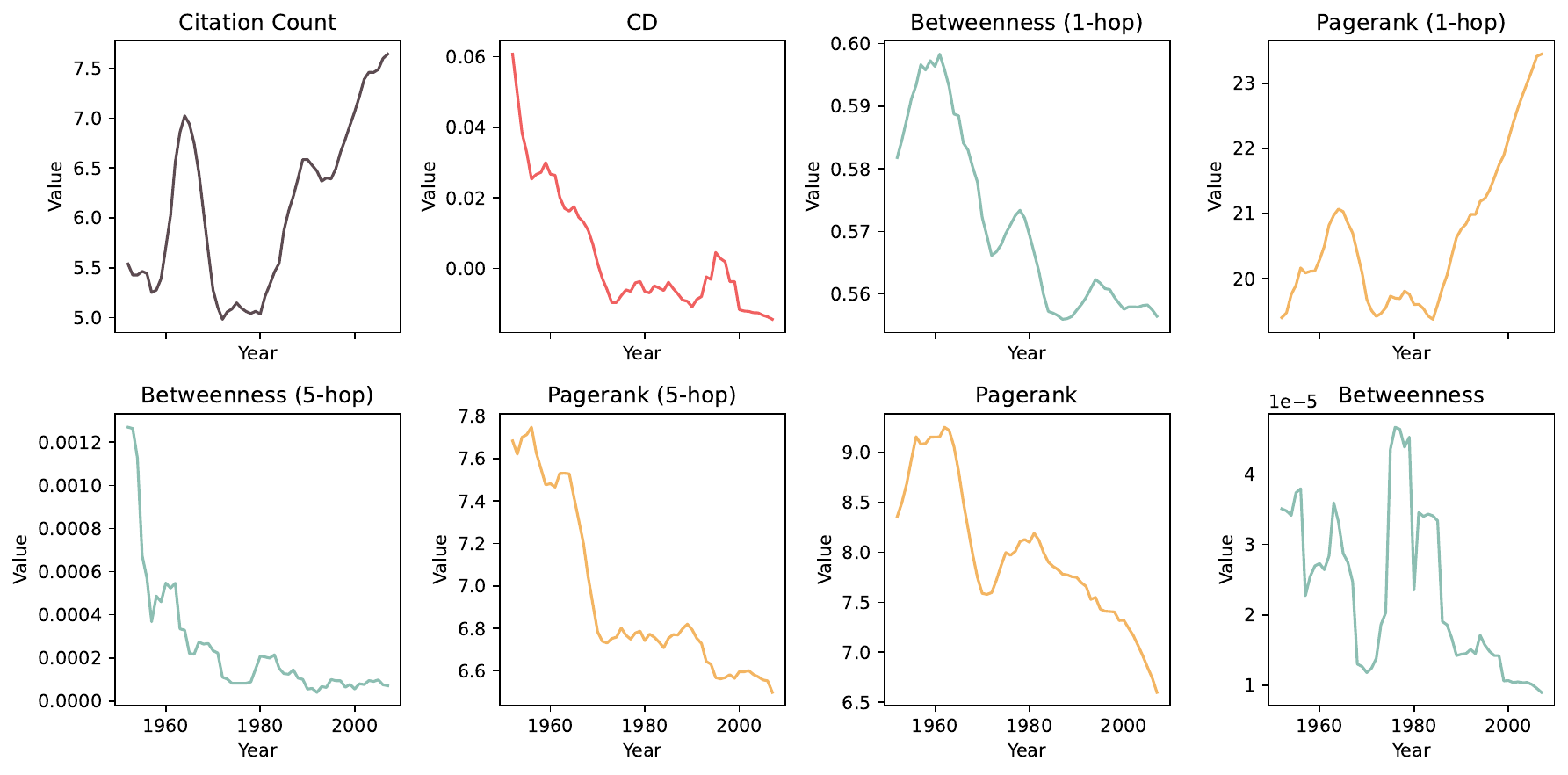}
    \includegraphics[width=0.35\textwidth]{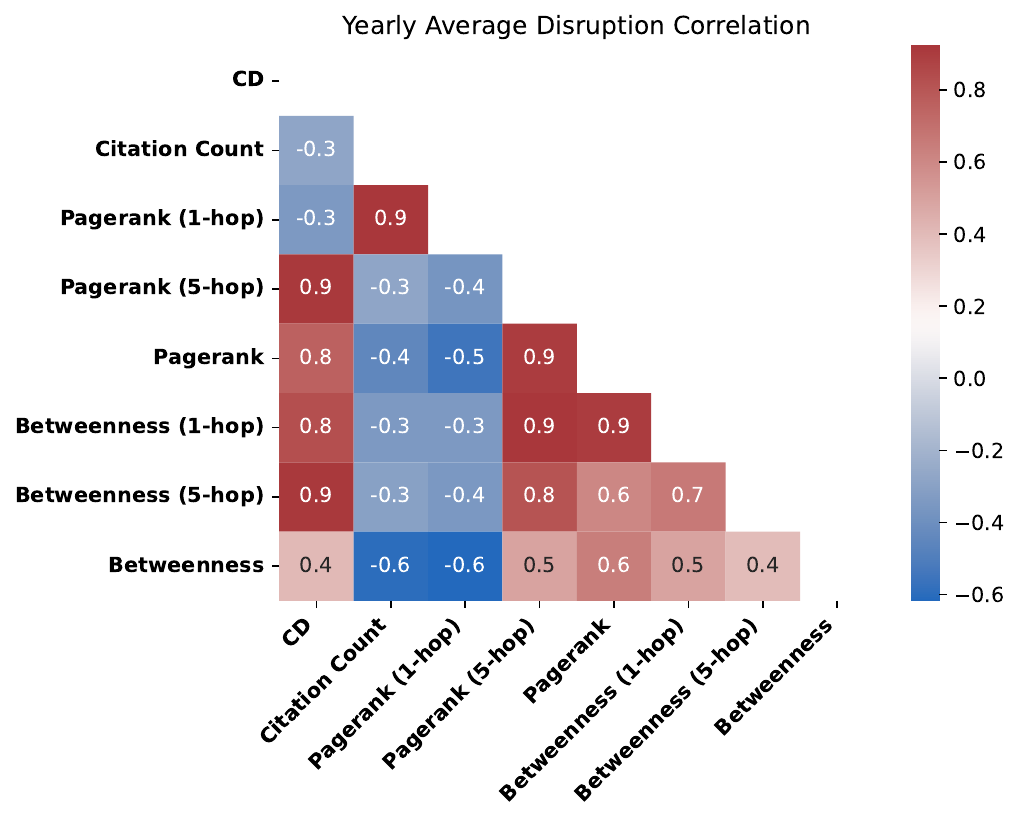}
    \caption{Left: Disruption measures for APS plotted between 1950 and 2010 with $h=5$. Each time series represents the average disruption value computed within each year smoothed by averaging across the two prior and succeeding years. Right: The correlation among each of these time series.}
    \label{fig:disruption_over_time}
\end{figure}

\subsection{Measuring Prize-Winning Papers}

\begin{table}[htbp]
\begin{tabular}{@{}lllll@{}}
\toprule
\textbf{Disruption Type}                 & $\bm{k}$                   & $\bm{h}$  & \textbf{AMR}    & \textbf{stdev}  \\ \midrule
\multirow{2}{*}{\textbf{CD}}               & \multirow{2}{*}{1}  & 5  & 45.7\% & 40.0\% \\
                                  &                     & 10 & 40.9\% & 39.1\%
                                  \\ \cmidrule(l){1-5}
\multirow{2}{*}{\textbf{Citation Count}} & \multirow{2}{*}{1}  & 5  & 9.9\%  & 18.9\% \\
                                  &                     & 10 & 9.3\%  & 18.2\%
                                  \\ \cmidrule(l){1-5}
\multirow{10}{*}{\textbf{Betweenness}}     & \multirow{2}{*}{1}  & 5  & 33.0\% & 22.1\% \\
                                  &                     & 10 & 30.3\% & 21.6\%
                                  \\ \cmidrule(l){2-5}
                                  & \multirow{2}{*}{3}  & 5  & 19.4\% & 13.7\% \\
                                  &                     & 10 & 13.1\% & 10.6\%
                                  \\ \cmidrule(l){2-5}
                                  & \multirow{2}{*}{5}  & 5  & 4.7\%  & 8.3\%  \\
                                  &                     & 10 & 5.4\%  & 8.0\%
                                  \\ \cmidrule(l){2-5}
                                  & \multirow{2}{*}{10} & 5  & 4.7\%  & 9.0\%  \\
                                  &                     & 10 & 2.8\%  & 5.0\%
                                  \\ \cmidrule(l){2-5}
                                  & \multirow{2}{*}{all}  & 5  & 3.6\%  & 7.0\%  \\
                                  &                     & 10 & 3.3\%  & 6.9\%
                                  \\ \cmidrule(l){1-5}
\multirow{10}{*}{\textbf{Pagerank}}        & \multirow{2}{*}{1}  & 5  & 11.8\% & 20.0\% \\
                                  &                     & 10 & 8.9\%  & 18.7\%
                                  \\ \cmidrule(l){2-5}
                                  & \multirow{2}{*}{3}  & 5  & 5.6\%  & 14.9\% \\
                                  &                     & 10 & 4.5\%  & 12.0\%
                                  \\ \cmidrule(l){2-5}
                                  & \multirow{2}{*}{5}  & 5  & 5.3\%  & 14.6\% \\
                                  &                     & 10 & 4.2\%  & 11.9\%
                                  \\ \cmidrule(l){2-5}
                                  & \multirow{2}{*}{10} & 5  & 6.0\%  & 14.9\% \\
                                  &                     & 10 & 4.1\%  & 11.8\%
                                  \\ \cmidrule(l){2-5}
                                  & \multirow{2}{*}{all}  & 5  & 3.2\%  & 9.0\%  \\
                                  &                     & 10 & 3.6\%  & 12.3\%  \\ \cmidrule(l){1-5} 
\end{tabular}
\caption{Average mean rank (AMR) of Nobel prize-winning papers according to various disruption measures evaluated 5 and 10 years after publication.}
\label{tab:amr}
\end{table}

Using data compiled from ~\cite{li2019dataset}, we determined all papers in the APS database which were cited within acceptance lectures for the Nobel prize in physics. 
We then computed the descending percentile rank of each paper in the database according to a each disruption measure. 
Based on this percentile ranking, we computed the mean ranking of all Nobel prize-winning papers within the dataset. 
The mean ranking of Nobel prize-winning papers across each time horizon and across each choice of $k$ (for $\BC_k$ and $\PR_k$) is plotted in Figure~\ref{fig:mean_ranking}.

\begin{figure}[htbp]
    \centering
    \includegraphics[width=0.87\textwidth]{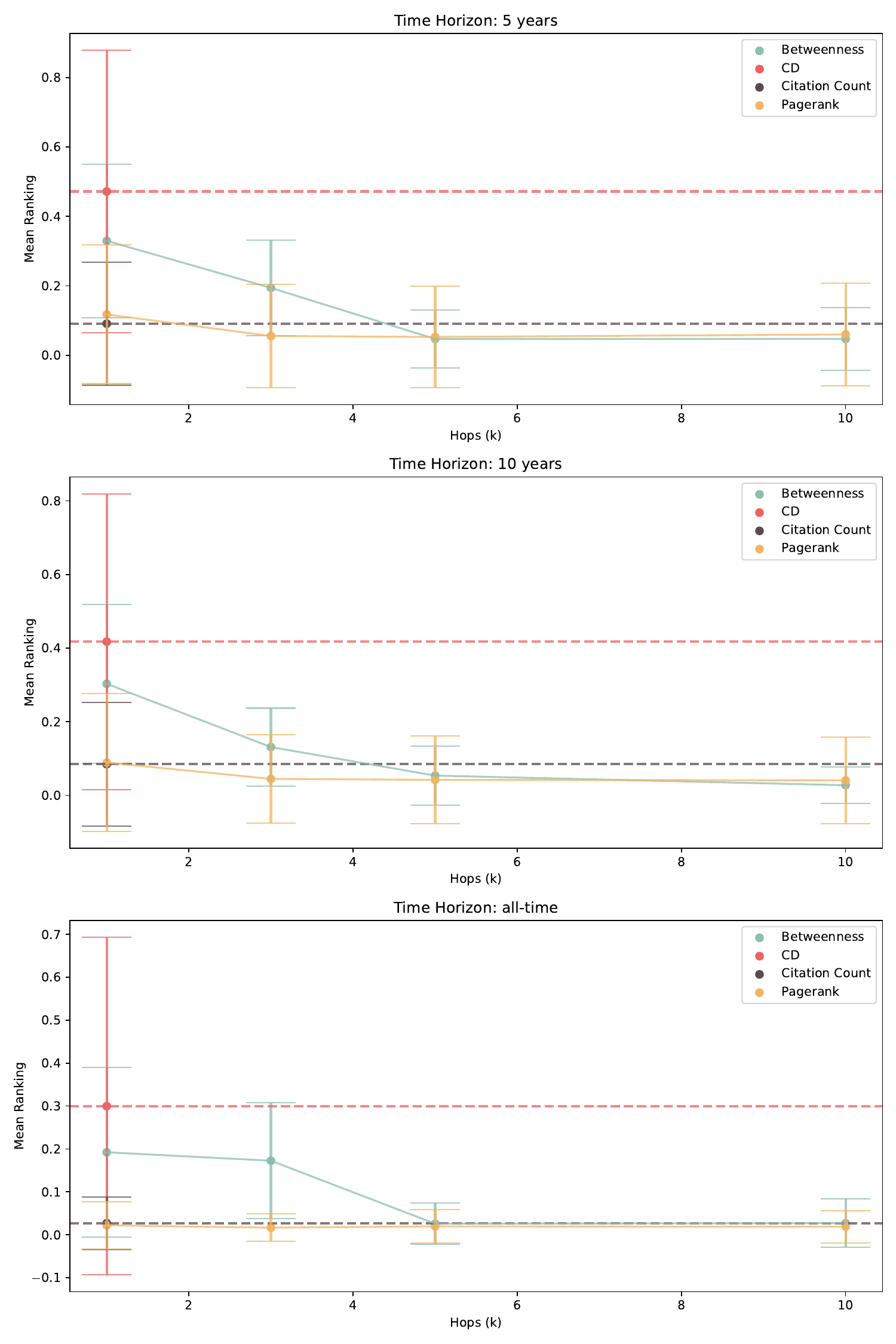}
    \caption{Mean disruption ranking of Nobel prize-winning papers all other papers. Ranking is reported as ``top-p'' percentile where p is the ranking in percentage terms.}
    \label{fig:mean_ranking}
\end{figure}

We observe that the CD Index generally ranks Nobel prize-winning papers as being only slightly more disruptive than the median paper across 5- and 10-year time horizons with high variance in this ranking, and 
unsurprisingly, the $1$-hop betweenness centrality measure of disruption $\BC_1$ performs similarly. 
However, we find that $\BC_k$ assigns higher rank to prize-winning papers as $k$ increases while the variance of this ranking shrinks.
As seen in Table~\ref{tab:amr}, computing betweenness on an arbitrary number of hops provides the highest ranking to Nobel prize-winning papers with the lowest standard deviation, followed by Pagerank and $\BC_{10}$ for 5- and 10-year time horizons.
However as the number of hops decreases, the average ranking assigned by $\BC_k$ approaches 50\%, whereas the $\PR_k$ rankings experience less average rank deterioration but with increased spread. 

\section{Summary}
We have observed that network centrality satisfies many of the properties desired by a measure of scientific and technological disruption within citation networks. 
In addition, we have shown that citation count and variants of the CD Index may be explicitly rewritten as measuring degree and betweenness centrality, respectively. 
By detailing this link between citation disruption and network centrality, and by explicitly consolidating existing disruption measures as specific forms of centrality, we have provided a more general and mathematically rigorous framework for measuring scientific and technological disruption within citation networks.

We have provided empirical evidence which reinforces the value of this network centrality view of disruption by observing the disruption assigned to Nobel prize-winning papers under various definitions of disruption and paper contexts. 
In particular, we find that disruption measurements may benefit from an expanded the context outside of a one-hop neighborhood around the focal paper which provides better accounts for down-stream innovations which may not directly attribute the focal work. 
We hope this relationship between citation disruption and network centrality will foster a more fruitful exchange of ideas between the field of science and innovation studies and the broader network science community.

\bibliographystyle{plainnat}
\bibliography{disruption}

\begin{thebibliography}{62}
\providecommand{\natexlab}[1]{#1}
\providecommand{\url}[1]{\texttt{#1}}
\expandafter\ifx\csname urlstyle\endcsname\relax
  \providecommand{\doi}[1]{doi: #1}\else
  \providecommand{\doi}{doi: \begingroup \urlstyle{rm}\Url}\fi

\bibitem[Andrade et~al.(2020)Andrade, Figueiredo, Silva, and
  Morais]{andrade2020measuring}
Felipe Falc{\~a}o1~Nazareno Andrade, Fl{\'a}vio Figueiredo, Diego Silva, and
  Fabio Morais.
\newblock Measuring disruption in song similarity networks.
\newblock In \emph{Proc. of the 21st Int. Society for Music Information
  Retrieval Conf}, 2020.

\bibitem[Anthonisse(1971)]{anthonisse1971rush}
Jac~M Anthonisse.
\newblock The rush in a directed graph.
\newblock \emph{Stichting Mathematisch Centrum. Mathematische Besliskunde},
  1971.

\bibitem[Arthur(2007)]{arthur2007structure}
W~Brian Arthur.
\newblock The structure of invention.
\newblock \emph{Research Policy}, 36\penalty0 (2):\penalty0 274--287, 2007.

\bibitem[Arthur(2009)]{arthur2009nature}
W~Brian Arthur.
\newblock \emph{The nature of technology: What it is and how it evolves}.
\newblock Simon and Schuster, 2009.

\bibitem[Azoulay et~al.(2019)Azoulay, Fons-Rosen, and
  Graff~Zivin]{azoulay2019does}
Pierre Azoulay, Christian Fons-Rosen, and Joshua~S Graff~Zivin.
\newblock Does science advance one funeral at a time?
\newblock \emph{American Economic Review}, 109\penalty0 (8):\penalty0
  2889--2920, 2019.

\bibitem[Bavelas(1950)]{bavelas1950communication}
Alex Bavelas.
\newblock Communication patterns in task-oriented groups.
\newblock \emph{The journal of the acoustical society of America}, 22\penalty0
  (6):\penalty0 725--730, 1950.

\bibitem[Bloch et~al.(2023)Bloch, Jackson, and Tebaldi]{bloch2023centrality}
Francis Bloch, Matthew~O Jackson, and Pietro Tebaldi.
\newblock Centrality measures in networks.
\newblock \emph{Social Choice and Welfare}, pages 1--41, 2023.

\bibitem[Borgatti et~al.(2009)Borgatti, Mehra, Brass, and
  Labianca]{borgatti2009network}
Stephen~P Borgatti, Ajay Mehra, Daniel~J Brass, and Giuseppe Labianca.
\newblock Network analysis in the social sciences.
\newblock \emph{science}, 323\penalty0 (5916):\penalty0 892--895, 2009.

\bibitem[Bornmann(2020)]{bornmann2020can}
Lutz Bornmann.
\newblock How can citation impact in bibliometrics be normalized? {A} new
  approach combining citing-side normalization and citation percentiles.
\newblock \emph{Quantitative Science Studies}, 1\penalty0 (4):\penalty0
  1553--1569, 2020.

\bibitem[Bornmann and Daniel(2008)]{bornmann2008citation}
Lutz Bornmann and Hans-Dieter Daniel.
\newblock What do citation counts measure? a review of studies on citing
  behavior.
\newblock \emph{Journal of documentation}, 64\penalty0 (1):\penalty0 45--80,
  2008.

\bibitem[Bornmann and Marx(2015)]{bornmann2015methods}
Lutz Bornmann and Werner Marx.
\newblock Methods for the generation of normalized citation impact scores in
  bibliometrics: Which method best reflects the judgements of experts?
\newblock \emph{Journal of Informetrics}, 9\penalty0 (2):\penalty0 408--418,
  2015.

\bibitem[Bornmann et~al.(2020{\natexlab{a}})Bornmann, Devarakonda, Tekles, and
  Chacko]{bornmann2020disruption}
Lutz Bornmann, Sitaram Devarakonda, Alexander Tekles, and George Chacko.
\newblock Are disruption index indicators convergently valid? the comparison of
  several indicator variants with assessments by peers.
\newblock \emph{Quantitative Science Studies}, 1\penalty0 (3):\penalty0
  1242--1259, 2020{\natexlab{a}}.

\bibitem[Bornmann et~al.(2020{\natexlab{b}})Bornmann, Devarakonda, Tekles, and
  Chacko]{bornmann2020disruptive}
Lutz Bornmann, Sitaram Devarakonda, Alexander Tekles, and George Chacko.
\newblock Disruptive papers published in scientometrics: meaningful results by
  using an improved variant of the disruption index originally proposed by wu,
  wang, and evans (2019).
\newblock \emph{Scientometrics}, 123\penalty0 (2):\penalty0 1149--1155,
  2020{\natexlab{b}}.

\bibitem[Brandes(2008)]{brandes2008variants}
Ulrik Brandes.
\newblock On variants of shortest-path betweenness centrality and their generic
  computation.
\newblock \emph{Social networks}, 30\penalty0 (2):\penalty0 136--145, 2008.

\bibitem[Bu et~al.(2021)Bu, Waltman, and Huang]{bu2021multidimensional}
Yi~Bu, Ludo Waltman, and Yong Huang.
\newblock A multidimensional framework for characterizing the citation impact
  of scientific publications.
\newblock \emph{Quantitative science studies}, 2\penalty0 (1):\penalty0
  155--183, 2021.

\bibitem[Chen et~al.(2021)Chen, Shao, and Fan]{chen2021destabilization}
Jiyao Chen, Diana Shao, and Shaokun Fan.
\newblock Destabilization and consolidation: Conceptualizing, measuring, and
  validating the dual characteristics of technology.
\newblock \emph{Research policy}, 50\penalty0 (1):\penalty0 104115, 2021.

\bibitem[Chu and Evans(2021)]{chu2021slowed}
Johan~SG Chu and James~A Evans.
\newblock Slowed canonical progress in large fields of science.
\newblock \emph{Proceedings of the National Academy of Sciences}, 118\penalty0
  (41):\penalty0 e2021636118, 2021.

\bibitem[David(1990)]{david1990dynamo}
Paul~A David.
\newblock The dynamo and the computer: an historical perspective on the modern
  productivity paradox.
\newblock \emph{American Economic Review}, 80\penalty0 (2):\penalty0 355--361,
  1990.

\bibitem[Deng and Zeng(2023)]{deng2023enhancing}
Nan Deng and An~Zeng.
\newblock Enhancing the robustness of the disruption metric against noise.
\newblock \emph{Scientometrics}, 128\penalty0 (4):\penalty0 2419--2428, 2023.

\bibitem[Dosi(1982)]{dosi1982technological}
Giovanni Dosi.
\newblock Technological paradigms and technological trajectories: a suggested
  interpretation of the determinants and directions of technical change.
\newblock \emph{Research policy}, 11\penalty0 (3):\penalty0 147--162, 1982.

\bibitem[Enos(1958)]{enos1958measure}
John~L Enos.
\newblock A measure of the rate of technological progress in the petroleum
  refining industry.
\newblock \emph{The Journal of Industrial Economics}, 6\penalty0 (3):\penalty0
  180--197, 1958.

\bibitem[Enos(1962)]{enos1962invention}
John~L Enos.
\newblock Invention and innovation in the petroleum refining industry.
\newblock In \emph{The rate and direction of inventive activity: Economic and
  social factors}, pages 299--322. Princeton University Press, 1962.

\bibitem[Figueiredo and Andrade(2019)]{figueiredo2019quantifying}
Flavio Figueiredo and Nazareno Andrade.
\newblock Quantifying disruptive influence in the allmusic guide.
\newblock In \emph{ISMIR}, pages 832--838, 2019.

\bibitem[Fleck(2012)]{fleck2012genesis}
Ludwik Fleck.
\newblock \emph{Genesis and development of a scientific fact}.
\newblock University of Chicago Press, 2012.

\bibitem[Fortunato et~al.(2018)Fortunato, Bergstrom, B{\"o}rner, Evans,
  Helbing, Milojevi{\'c}, Petersen, Radicchi, Sinatra, Uzzi,
  et~al.]{fortunato2018science}
Santo Fortunato, Carl~T Bergstrom, Katy B{\"o}rner, James~A Evans, Dirk
  Helbing, Sta{\v{s}}a Milojevi{\'c}, Alexander~M Petersen, Filippo Radicchi,
  Roberta Sinatra, Brian Uzzi, et~al.
\newblock Science of science.
\newblock \emph{Science}, 359\penalty0 (6379):\penalty0 eaao0185, 2018.

\bibitem[Frahm et~al.(2014)Frahm, Eom, and Shepelyansky]{frahm2014google}
Klaus~M Frahm, Young-Ho Eom, and Dima~L Shepelyansky.
\newblock Google matrix of the citation network of physical review.
\newblock \emph{Physical Review E}, 89\penalty0 (5):\penalty0 052814, 2014.

\bibitem[Freeman(1977)]{freeman1977set}
Linton~C Freeman.
\newblock A set of measures of centrality based on betweenness.
\newblock \emph{Sociometry}, pages 35--41, 1977.

\bibitem[Funk and Owen-Smith(2017)]{funk2017dynamic}
Russell~J Funk and Jason Owen-Smith.
\newblock A dynamic network measure of technological change.
\newblock \emph{Management science}, 63\penalty0 (3):\penalty0 791--817, 2017.

\bibitem[Ghasemi et~al.(2014)Ghasemi, Seidkhani, Tamimi, Rahgozar, and
  Masoudi-Nejad]{ghasemi2014centrality}
Mahdieh Ghasemi, Hossein Seidkhani, Faezeh Tamimi, Maseud Rahgozar, and Ali
  Masoudi-Nejad.
\newblock Centrality measures in biological networks.
\newblock \emph{Current Bioinformatics}, 9\penalty0 (4):\penalty0 426--441,
  2014.

\bibitem[Jaffe and Trajtenberg(2002)]{jaffe2002patents}
Adam~B Jaffe and Manuel Trajtenberg.
\newblock \emph{Patents, citations, and innovations: A window on the knowledge
  economy}.
\newblock MIT press, 2002.

\bibitem[Kuhn(1962)]{kuhn1962structure}
Thomas~S Kuhn.
\newblock The structure of scientifi revolutions.
\newblock \emph{The Un of Chicago Press}, 2:\penalty0 90, 1962.

\bibitem[Landherr et~al.(2010)Landherr, Friedl, and
  Heidemann]{landherr2010critical}
Andrea Landherr, Bettina Friedl, and Julia Heidemann.
\newblock A critical review of centrality measures in social networks.
\newblock \emph{Wirtschaftsinformatik}, 52:\penalty0 367--382, 2010.

\bibitem[Langville and Meyer(2004)]{langville2004deeper}
Amy~N Langville and Carl~D Meyer.
\newblock Deeper inside pagerank.
\newblock \emph{Internet Mathematics}, 1\penalty0 (3):\penalty0 335--380, 2004.

\bibitem[Leahey et~al.(2023)Leahey, Lee, and Funk]{leahey2023types}
Erin Leahey, Jina Lee, and Russell~J Funk.
\newblock What types of novelty are most disruptive?
\newblock \emph{American Sociological Review}, 88\penalty0 (3):\penalty0
  562--597, 2023.

\bibitem[Leibel and Bornmann(2023)]{leibel2023we}
Christian Leibel and Lutz Bornmann.
\newblock What do we know about the disruption indicator in scientometrics? an
  overview of the literature.
\newblock \emph{arXiv preprint arXiv:2308.02383}, 2023.

\bibitem[Leydesdorff et~al.(2021)Leydesdorff, Tekles, and
  Bornmann]{leydesdorff2021proposal}
Loet Leydesdorff, Alexander Tekles, and Lutz Bornmann.
\newblock A proposal to revise the disruption index.
\newblock \emph{Profesional de la informaci{\'o}n (EPI)}, 30\penalty0 (1),
  2021.

\bibitem[Li et~al.(2019)Li, Yin, Fortunato, and Wang]{li2019dataset}
Jichao Li, Yian Yin, Santo Fortunato, and Dashun Wang.
\newblock A dataset of publication records for nobel laureates.
\newblock \emph{Scientific Data}, 6\penalty0 (1):\penalty0 1--10, 2019.

\bibitem[Li and Chen(2022)]{li2022measuring}
Jiexun Li and Jiyao Chen.
\newblock Measuring destabilization and consolidation in scientific knowledge
  evolution.
\newblock \emph{Scientometrics}, 127\penalty0 (10):\penalty0 5819--5839, 2022.

\bibitem[Lin et~al.(2022)Lin, Evans, and Wu]{lin2022new}
Yiling Lin, James~A Evans, and Lingfei Wu.
\newblock New directions in science emerge from disconnection and discord.
\newblock \emph{Journal of Informetrics}, 16\penalty0 (1):\penalty0 101234,
  2022.

\bibitem[Liu et~al.(2023)Liu, Jones, Uzzi, and Wang]{liu2023data}
Lu~Liu, Benjamin~F Jones, Brian Uzzi, and Dashun Wang.
\newblock Data, measurement and empirical methods in the science of science.
\newblock \emph{Nature human behaviour}, pages 1--13, 2023.

\bibitem[Ma et~al.(2008)Ma, Guan, and Zhao]{ma2008bringing}
Nan Ma, Jiancheng Guan, and Yi~Zhao.
\newblock Bringing pagerank to the citation analysis.
\newblock \emph{Information Processing \& Management}, 44\penalty0
  (2):\penalty0 800--810, 2008.

\bibitem[Maslov and Redner(2008)]{maslov2008promise}
Sergei Maslov and Sidney Redner.
\newblock Promise and pitfalls of extending google's pagerank algorithm to
  citation networks.
\newblock \emph{Journal of Neuroscience}, 28\penalty0 (44):\penalty0
  11103--11105, 2008.

\bibitem[Mokyr(1992)]{mokyr1992lever}
Joel Mokyr.
\newblock \emph{The lever of riches: Technological creativity and economic
  progress}.
\newblock Oxford University Press, 1992.

\bibitem[Newman(2018)]{newman2018networks}
M.~Newman.
\newblock \emph{Networks}.
\newblock OUP Oxford, 2018.
\newblock ISBN 9780192527493.
\newblock URL \url{https://books.google.com/books?id=YdZjDwAAQBAJ}.

\bibitem[Page et~al.(1998)Page, Brin, Motwani, and Winograd]{page1998pagerank}
Lawrence Page, Sergey Brin, Rajeev Motwani, and Terry Winograd.
\newblock The pagerank citation ranking: Bring order to the web.
\newblock Technical report, Technical report, stanford University, 1998.

\bibitem[Park et~al.(2023)Park, Leahey, and Funk]{park2023papers}
Michael Park, Erin Leahey, and Russell~J Funk.
\newblock Papers and patents are becoming less disruptive over time.
\newblock \emph{Nature}, 613\penalty0 (7942):\penalty0 138--144, 2023.

\bibitem[Popper(2005)]{popper2005logic}
Karl Popper.
\newblock \emph{The logic of scientific discovery}.
\newblock Routledge, 2005.

\bibitem[Price(1963)]{price1963}
Derek J. de~Solla Price.
\newblock \emph{Little science, Big science}.
\newblock Columbia University Press, 1963.

\bibitem[Rosenberg(1982)]{rosenberg1982inside}
Nathan Rosenberg.
\newblock \emph{Inside the black box: Technology and economics}.
\newblock Cambridge University Press, 1982.

\bibitem[Schumpeter(1942)]{schumpeter1942capitalism}
Joseph~A Schumpeter.
\newblock \emph{Capitalism, socialism and democracy}.
\newblock Harper \& Brothers, 1942.

\bibitem[Senanayake et~al.(2015)Senanayake, Piraveenan, and
  Zomaya]{senanayake2015pagerank}
Upul Senanayake, Mahendra Piraveenan, and Albert Zomaya.
\newblock The pagerank-index: Going beyond citation counts in quantifying
  scientific impact of researchers.
\newblock \emph{PloS one}, 10\penalty0 (8):\penalty0 e0134794, 2015.

\bibitem[Tahamtan and Bornmann(2018)]{tahamtan2018core}
Iman Tahamtan and Lutz Bornmann.
\newblock Core elements in the process of citing publications: Conceptual
  overview of the literature.
\newblock \emph{Journal of informetrics}, 12\penalty0 (1):\penalty0 203--216,
  2018.

\bibitem[Tahamtan and Bornmann(2019)]{tahamtan2019citation}
Iman Tahamtan and Lutz Bornmann.
\newblock What do citation counts measure? an updated review of studies on
  citations in scientific documents published between 2006 and 2018.
\newblock \emph{Scientometrics}, 121:\penalty0 1635--1684, 2019.

\bibitem[Tushman and Anderson(1986)]{tushman1986technological}
Michael~L Tushman and Philip Anderson.
\newblock Technological discontinuities and organizational environments.
\newblock \emph{Administrative Science Quarterly}, 31\penalty0 (3):\penalty0
  439--465, 1986.

\bibitem[Waltman(2016)]{waltman2016review}
Ludo Waltman.
\newblock A review of the literature on citation impact indicators.
\newblock \emph{Journal of informetrics}, 10\penalty0 (2):\penalty0 365--391,
  2016.

\bibitem[Wang and Barab{\'a}si(2021)]{wang2021science}
Dashun Wang and Albert-L{\'a}szl{\'o} Barab{\'a}si.
\newblock \emph{The science of science}.
\newblock Cambridge University Press, 2021.

\bibitem[Wang et~al.(2023)Wang, Ma, Mao, Bai, Liang, and
  Li]{wang2023quantifying}
Shiyun Wang, Yaxue Ma, Jin Mao, Yun Bai, Zhentao Liang, and Gang Li.
\newblock Quantifying scientific breakthroughs by a novel disruption indicator
  based on knowledge entities.
\newblock \emph{Journal of the Association for Information Science and
  Technology}, 74\penalty0 (2):\penalty0 150--167, 2023.

\bibitem[Wu et~al.(2019)Wu, Wang, and Evans]{wu2019large}
Lingfei Wu, Dashun Wang, and James~A Evans.
\newblock Large teams develop and small teams disrupt science and technology.
\newblock \emph{Nature}, 566\penalty0 (7744):\penalty0 378--382, 2019.

\bibitem[Wu and Yan(2019)]{wu2019solo}
Qiang Wu and Zhaoyang Yan.
\newblock Solo citations, duet citations, and prelude citations: New measures
  of the disruption of academic papers.
\newblock \emph{arXiv preprint arXiv:1905.03461}, 2019.

\bibitem[Wu and Wu(2019)]{wu2019confusing}
Shijie Wu and Qiang Wu.
\newblock A confusing definition of disruption.
\newblock 2019.

\bibitem[Yang et~al.(2023)Yang, Deng, Wang, Zhang, and
  Yang]{yang2023disruptive}
Alex~J Yang, Sanhong Deng, Hao Wang, Yiqin Zhang, and Wenxia Yang.
\newblock Disruptive coefficient and 2-step disruptive coefficient: Novel
  measures for identifying vital nodes in complex networks.
\newblock \emph{Journal of Informetrics}, 17\penalty0 (3):\penalty0 101411,
  2023.

\bibitem[Zeng et~al.(2021)Zeng, Fan, Di, Wang, and Havlin]{zeng2021fresh}
An~Zeng, Ying Fan, Zengru Di, Yougui Wang, and Shlomo Havlin.
\newblock Fresh teams are associated with original and multidisciplinary
  research.
\newblock \emph{Nature human behaviour}, 5\penalty0 (10):\penalty0 1314--1322,
  2021.

\end{thebibliography}

\end{document}